\renewcommand{\subset}{\subseteq}
\newsavebox\myboxA
\newsavebox\myboxB
\newlength\mylenA
\newcommand*\xoverline[2][0.75]{%
    \sbox{\myboxA}{$\m@th#2$}%
    \setbox\myboxB\null% Phantom box
    \ht\myboxB=\ht\myboxA%
    \dp\myboxB=\dp\myboxA%
    \wd\myboxB=#1\wd\myboxA% Scale phantom
    \sbox\myboxB{$\m@th\overline{\copy\myboxB}$}%  Overlined phantom
    \setlength\mylenA{\the\wd\myboxA}%   calc width diff
    \addtolength\mylenA{-\the\wd\myboxB}%
    \ifdim\wd\myboxB<\wd\myboxA%
       \rlap{\hskip 0.5\mylenA\usebox\myboxB}{\usebox\myboxA}%
    \else
        \hskip -0.5\mylenA\rlap{\usebox\myboxA}{\hskip 0.5\mylenA\usebox\myboxB}%
    \fi}
\newtheorem{theorem}            {Theorem}[section]
\newtheorem{corollary}          [theorem]{Corollary}
\newtheorem{proposition}        [theorem]{Proposition}
\newtheorem{remark}         [theorem]{Remark}
\newcommand{\cR}{\mathcal{R}}
\newcommand{\cM}{\mathcal{M}}
\newcommand{\cN}{\mathcal{N}}
\newcommand{\cH}{\mathcal{H}}
\def\moverlay{\mathpalette\mov@rlay}
\def\mov@rlay#1#2{\leavevmode\vtop{
		\baselineskip\z@skip \lineskiplimit-\maxdimen
		\ialign{\hfil$#1##$\hfil\cr#2\crcr}}}
\newcommand{\CC}{\mathbb{C}}
\newcommand{\RR}{\mathbb{R}}
\newcommand{\cI}{\mathcal{I}}
\newcommand{\cP}{\mathcal{P}}
\newcommand{\trace}{\operatorname{tr}}
\newcommand{\vertiii}[1]{{\left\vert\kern-0.25ex\left\vert\kern-0.25ex\left\vert #1 
    \right\vert\kern-0.25ex\right\vert\kern-0.25ex\right\vert}}
\newcommand{\wXi}{\widetilde{\Xi}}
 \title[Convexity]{Convexity of a certain operator trace functional}
  \author[Evert]{ Eric Evert${}^1$}
 \address{Group Science, Engineering and Technology\\
 KU Leuven Kulak \\
  E. Sabbelaan 53, 8500 Kortrijk, Belgium \\
  and
  \newline
   Electrical Engineering ESAT/STADIUS\\
  KU Leuven \\
  Kasteelpark Arenberg 10, 3001 Leuven, Belgium
  }
 \email{eric.evert@kuleuven.be}
\thanks{${}^1$Supported by the Fonds de la Recherche Scientifique--FNRS and the Fonds Wetenschappelijk Onderzoek--Vlaanderen under EOS Project no 30468160 (SeLMA)}
 \author[McCullough]{Scott McCullough}
 \address{Department of Mathematics\\
   University of Florida \\ Gainesville, FL}
 \email{sam@ufl.edu}
  \author[Strekelj]{Tea  \v{S}trekelj}
  \address{Institute of Mathematics, Physics and Mechanics, Ljubljana, Slovenia}
  \email{tea.strekelj@fmf.uni-lj.si}
 \author[Vershynina]{Anna Vershynina${}^2$}
 \address{Department of Mathematics\\
 University of Houston \\ TX}
 \email{anna@math.uh.edu}
 \thanks{${}^2$Supported by NSF grants DMS-1812734, DMS-2105583.  }
\subjclass[2010]{ 47A63, 15A99, 94A17}
\keywords{trace, matrices, operator convexity, operator concavity, relative entropy}
\begin{document}
\maketitle
 \begin{abstract}
In this article the operator trace function $ \Lambda_{r,s}(A)[K, M] := \trace(K^*A^r M A^r K)^s$ is introduced and its convexity and concavity properties are investigated.
This function has a direct connection to several well-studied operator trace functions that  appear in quantum information theory, in particular when studying data processing inequalities of various relative entropies.  In the paper the interplay between $\Lambda_ {r,s}$ and the well-known operator functions $\Gamma_{p,s}$ and $\Psi_{p,q,s}$ is used to study the stability of their convexity (concavity) properties. This interplay may be used to ensure that $\Lambda_{r,s}$ is convex (concave) in certain parameter ranges when $M=I$ or $K=I.$ However, our main result shows that convexity (concavity) is  surprisingly lost when perturbing those matrices even a little.  To complement the main theorem, the convexity (concavity) domain of $\Lambda$ itself is examined. The final result states that $\Lambda_{r,s}$ is never concave and it is convex if and only if $r=1$ and $s\geq 1/2.$ 
 \end{abstract}

\section{Introduction}

We consider convexity of several operator trace functions motivated by problems in quantum information theory. In particular, we consider the data processing inequality (or monotonicity) for several relative entropies. The inequality effectively states that quantum states become harder to distinguish after they pass through a noisy quantum channel \cite{BDK08, HP91}. It has been shown that this inequality for some relative entropies is equivalent to convexity of certain trace functions. Additionally, the convexity of these trace functions  give rise to the conditions on states that ensure  equality in the data processing inequality.

\subsection{Umegaki relative entropy.} Let $\cH$ be a finite-dimensional Hilbert space and let  $\rho$ and $\sigma$ be two density matrices (states) on $\cH.$ The Umegaki relative entropy is defined as
\[
 S(\rho\|\sigma):=\trace(\rho\log\rho-\rho\log\sigma)\ , 
\]
when the null space of $\sigma$ is contained in null space of $\rho,$ and $+\infty$ otherwise. This quantity was defined in 1962 \cite{U54} as a direct generalization of a classical relative entropy known as the Kullback-Leibler divergence \cite{B97}.
 Umegaki relative entropy is sometimes referred to as quantum relative entropy or just relative entropy.

In 1975 Lindblad \cite{L75}, building on the work of Lieb and Riskai \cite{LR73}, proved that the Umegaki relative entropy satisfies the data processing inequality for quantum channels. Let $\cN$ be a completely positive trace preserving map, that is, a quantum channel, mapping states between two Hilbert spaces. Then the following holds
\begin{equation}\label{eq:DPI}
S(\cN(\rho)\|\cN(\sigma))\leq S(\rho,\sigma)\ .
\end{equation}
This inequality is commonly referred by two names: data processing inequality (DPI) and monotonicity inequality.

In 1986-88, Petz \cite{P86, P88} classified all states and quantum channels that saturate the data processing inequality.  Here we say the states $\rho$ and $\sigma$ saturate the data processing inequality
if equality holds in the DPI \eqref{eq:DPI} for these states.  Petz showed that for a given quantum channel $\cN,$ two quantum states $\rho$ and $\sigma$ saturate the DPI (\ref{eq:DPI}) if and only if both states can be recovered by a map $\cR,$ known as a Petz recovery map. The map $\cR$  has the following explicit form
\begin{equation}
\label{e:Petz}
 \cR_{\rho, \cN}(\omega)=\rho^{1/2}\cN^*\left(\cN(\rho)^{-1/2}\omega\, \cN(\rho)^{-1/2} \right)\rho^{1/2}\ . 
\end{equation}
It is trivial that  $\cR_{\rho,\cN}$ always recovers $\rho$ perfectly; that is,  $\cR_{\rho,\cN}(\cN(\rho))=\rho.$ However, the recovery of $\sigma$ via $\cR_{\rho, \cN}(\sigma)=\sigma$ cannot be generally expected.  In other words, Petz showed that 
\[
 \cR_{\rho, \cN}(\sigma)=\sigma\ \Longleftrightarrow  \ S(\cN(\rho)\|\cN(\sigma))= S(\rho,\sigma)\ .
\]
One additional point to note is that the recovery condition is symmetric in $\rho$ and $\sigma;$ that is, $\cR_{\rho, \cN}(\sigma)=\sigma$ if and only if $\cR_{\sigma, \cN}(\rho)=\rho.$

\subsection{R\'enyi relative entropy.} 
{ For $\alpha\in(-\infty, 1)\cup(1, +\infty),$ 
the quantum R\'enyi relative entropy (RRE), denoted $S_\alpha,$
 is a generalization of a classical R\'enyi divergence. For 
 states $\rho$ and  $\sigma,$ it is defined by
 \[
 S_\alpha(\rho\|\sigma):=\frac{1}{\alpha-1}\log\trace\left( \rho^\alpha\sigma^{1-\alpha}\right)\ 
\]
 when the null space of $\sigma$ is contained in null space of $\rho,$ and $+\infty$ otherwise. In the limit $\alpha\rightarrow1,$ the R\'enyi relative entropy approaches the Umegaki relative entropy.
 
R\'enyi entropy relative satisfies the data processing inequality
\[
 S_\alpha(\cN(\rho)\|\cN(\sigma))\leq S_\alpha(\rho,\sigma) 
\]
for  $\alpha\in[0,1)\cup(1,2].$ This inequality was  was established for  a larger class of quantum relative entropies, called quantum $f$-divergences, in \cite{HMH09, P86-1}.  Saturation of the data processing inequality for the R\'enyi relative entropy is again equivalent to the ability to recover both states after they pass through a quantum channel, and the recovery map is the same one that appears in equation~\eqref{e:Petz}}\cite{HMPB11}. Some other equivalent conditions on  pairs of states that saturate the DPI for R\'enyi relative entropy were given by Hiai et al in the same reference \cite{HMPB11}.
 
\subsection{Sandwiched R\'enyi relative entropy.} Another way to generalize the classical R\'enyi divergence is to take into account the noncommutativity of $\rho$ and $\sigma.$ In 2013 the following quantity was proposed independently by Wilde et al \cite{WWY14} and M\"uller-Lennert et al \cite{MDSFT13}: for $\alpha\in(0, 1)\cup(1, +\infty),$ the sandwiched R\'enyi relative entropy (sandwiched RRE) is defined as
\[
\tilde{S}_\alpha(\rho\|\sigma):=\frac{1}{\alpha-1}\log\left( \trace\left[ \sigma^{\frac{1-\alpha}{2\alpha}}\rho \sigma^{\frac{1-\alpha}{2\alpha}} \right]^\alpha \right)\ ,
\]
 when the null space of $\sigma$ is contained in null space of $\rho,$ and $+\infty$ otherwise.  In the limit $\alpha\rightarrow1,$ the sandwiched R\'enyi relative entropy approaches the Umegaki relative entropy.

 The quantity inside the logarithm does not belong to the class of $f$-divergences, so previous results on the DPI and its saturation cannot be applied to the sandwiched RRE. Several papers in 2013 provided proofs for the data processing inequality for the sandwiched RRE: for $\alpha\in(1,2]$ in the papers Wilde et al \cite{WWY14} and M\"uller-Lennert et al \cite{MDSFT13}; for $\alpha\in(1,\infty)$ in \cite{B13} by Beigi; and independently of Beigi, Frank and Lieb proved the inequality for $\alpha\in [1/2, 1)\cup (1,\infty)$ in \cite{FL13}.  In \cite{FL13} it was shown that the data processing inequality follows from joint convexity or concavity of the following trace map:
\begin{equation}
 \label{e:FLmap}
 (\rho,\sigma)\mapsto \trace\left(\sigma^{(1-\alpha)/(2\alpha)}\rho \sigma^{(1-\alpha)/(2\alpha)} \right)^\alpha\ .
\end{equation}
 Frank and Lieb showed that the map in equation~\eqref{e:FLmap} is jointly concave for $1/2\leq \alpha<1$ and jointly convex for $\alpha>1,$ by using a variational method to separate the arguments and then  showing that  convexity holds in each of $\rho$ and $\sigma$ independently. This proof relied on the following  result. For  a fixed operator $K,$  the map on positive operators
 \begin{equation}
  \label{e:KAK}
 A\mapsto \trace(K^*A^p K)^{1/p}
 \end{equation}
 is concave for $-1\leq p\leq 1,$ $p\neq 0.$ The concavity of  the map in equation~\eqref{e:KAK} was shown by Epstein \cite{E73} for $0<p\leq 1$ (an alternative proof is given in \cite{CL08} by Carlen-Lieb) and by Frank and Lieb \cite{FL13} for $-1\leq p<0.$ 
 
 As a generalization of the  map~\eqref{e:KAK}, given an operator $K$ and $p,s\in\RR,$ define, for positive operators $A,$
 \begin{equation}\label{def:gamma}
 \Gamma_{p,s}(A)[K]:=\trace(K^* A^p K)^s\ .
 \end{equation}
We note that when considering properties of $\Gamma_{r,s}$ that hold for all $K,$ it is sufficient to consider $s>0,$ since
 $\Gamma_{p,s}(A)[K] =\Gamma_{-p,-s}(A)[(K^*)^{-1}]$
 and invertible operators are dense.

 \begin{theorem}\label{thm:gamma} Let $s>0$ and $K$ be any operator.
 \begin{enumerate}[(a)]
 \item If $0\leq p\leq 1$ and $0<s\leq 1/p,$ then $\Gamma_{p,s}(A)[K]$ is concave in $A$ \cite{CL08, E73, H13}.
 \item If $-1\leq p\leq 0$ and $s>0,$ then $\Gamma_{p,s}(A)[K]$ is convex in $A$ \cite{H13}.
 \item If $1\leq p\leq 2$ and $s\geq 1/p,$ then $\Gamma_{p,s}(A)[K]$ is convex in $A$ \cite{CL08}.
 \end{enumerate}
  \end{theorem}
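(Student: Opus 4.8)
The plan is to deduce all three items from the endpoint results recorded around \eqref{e:KAK}---the concavity or convexity of $A\mapsto \trace(K^*A^pK)^{1/p}$ due to Epstein, Carlen--Lieb, and Frank--Lieb---by composing with a single scalar power. Fix $K$ and set $g(A):=\trace(K^*A^pK)^{1/p}$, so that $\Gamma_{p,s}(A)[K]=g(A)^{ps}$. For $A$ positive (invertible when $p<0$) one has $K^*A^pK\ge 0$, hence $g(A)\ge 0$, with $g(A)>0$ on the dense set of invertible $A$. The degenerate case $p=0$ is immediate, since $\Gamma_{0,s}(A)[K]=\trace(K^*K)^s$ is constant in $A$; so I would assume $p\ne 0$. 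The entire reduction then rests on the elementary composition calculus: a nondecreasing concave (resp.\ convex) function of a concave (resp.\ convex) function is concave (resp.\ convex), while a nonincreasing convex function of a concave function is convex.

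For (a) I would invoke Epstein's theorem, which makes $g$ concave and nonnegative for $0<p\le 1$. The hypothesis $0<s\le 1/p$ gives $ps\in(0,1]$, so $t\mapsto t^{ps}$ is concave and nondecreasing on $[0,\infty)$; composing yields concavity of $g^{ps}=\Gamma_{p,s}(A)[K]$. For (c) the Carlen--Lieb endpoint makes $g$ convex and nonnegative for $1\le p\le 2$, while $s\ge 1/p$ forces $ps\ge 1$, so $t\mapsto t^{ps}$ is convex and nondecreasing on $[0,\infty)$ and the composition $g^{ps}$ is convex.

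Item (b) is the one that needs careful sign bookkeeping. For $-1\le p<0$ the Frank--Lieb endpoint makes $g$ \emph{concave} and strictly positive on invertible $A$. Since $s>0$ and $p<0$, one has $ps<0$, and $t\mapsto t^{ps}$ is convex and \emph{nonincreasing} on $(0,\infty)$; the ``convex nonincreasing of concave is convex'' rule then gives convexity of $g^{ps}=\Gamma_{p,s}(A)[K]$. Passing from invertible $A$ to all positive $A$ would be handled by density and continuity, exactly as in the remark following \eqref{def:gamma}.

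The genuinely hard content lies entirely in the three endpoint statements---Epstein's concavity theorem and its convex and negative-power counterparts---whose proofs rely on analytic (Herglotz/Pick-function) arguments rather than the elementary convexity calculus above, so I would cite them rather than reprove them. The only points requiring care in the reduction itself are the positivity and domain issues when $p<0$, so that $t\mapsto t^{ps}$ is applied on $(0,\infty)$ where the composition rules are valid, together with matching the monotonicity direction of $t\mapsto t^{ps}$ to the concavity or convexity of $g$ in each regime. Tabulating the sign of $ps$ case by case is precisely what makes the three separate parameter windows of the theorem fall out.
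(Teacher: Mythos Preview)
Your reduction rests on the claimed identity $\Gamma_{p,s}(A)[K]=g(A)^{ps}$ with $g(A)=\trace\big[(K^*A^pK)^{1/p}\big]$, and this identity is false in dimension $n\ge 2$. In the paper's notation, $\Gamma_{p,s}(A)[K]=\trace\big[(K^*A^pK)^{s}\big]$ is the trace of the \emph{matrix} power, not a scalar power of a trace. Writing the eigenvalues of $K^*A^pK$ as $\lambda_1,\dots,\lambda_n\ge 0$, one has
\[
\Gamma_{p,s}(A)[K]=\sum_{i}\lambda_i^{\,s},
\qquad
g(A)^{ps}=\Big(\sum_i \lambda_i^{\,1/p}\Big)^{ps},
\]
and these coincide only when $ps=1$ (or $n=1$). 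So the composition-with-a-scalar-power argument collapses: you are applying the elementary ``convex/concave of convex/concave'' rule to an equality that does not hold. Everything downstream of that step---parts (a), (b), and (c)---inherits this gap.

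For context, the paper does not supply its own proof of this theorem; it is quoted as a known result with references to Epstein, Carlen--Lieb, and Hiai. The actual proofs in those sources do not proceed by a scalar composition with the single endpoint $s=1/p$; passing from the endpoint $\trace[(K^*A^pK)^{1/p}]$ to general $s$ requires genuinely matrix-analytic tools (e.g., Epstein-type complex-interpolation/Pick-function arguments, or variational characterisations of $\trace X^s$ via Schatten duality), precisely because $\trace(X^s)$ cannot be written as a function of $\trace(X^{1/p})$ alone. If you want a reduction in the spirit of your proposal, the correct object to compose with a scalar power is not $g(A)$ but something like a Schatten quasi-norm $\|K^*A^pK\|_s$, and then one must invoke the concavity/convexity of $A\mapsto \|K^*A^pK\|_s^{1/p}$ for the appropriate $s$, which is itself the substantive content of the cited theorems rather than just the single case $s=1/p$.
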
 
 These parameter conditions are also necessary in the sense described
 by Theorem~\ref{thm:gamma-necessary} immediately below. 
 For positive integers $n,$ let $\cM_n$ and $\cP_n$ denote the set of $n\times n$ matrices and positive
 definite $n\times n$ matrices with complex entries respectively.
 
 \begin{theorem}[Hiai \cite{H13}]\label{thm:gamma-necessary} 
 Let $s>0$ and $p\neq 0.$
 \begin{enumerate}[(a)]
 \item If, for each invertible $K\in \cM_2,$  the mapping
   $\Gamma_{p,s}(A)[K],$ as a function on $\cP_2,$ is concave, 
      then $0<p\leq 1$ and $0<s\leq 1/p.$
 \item  If, for each invertible $K\in \cM_4,$ the mapping  $\Gamma_{p,s},$ as a function on $\cP_4,$  is convex, 
     then either $-1\leq p<0$ and $s>0$ or $1\leq p\leq 2$ and $s\geq 1/p.$
  \end{enumerate}
 \end{theorem}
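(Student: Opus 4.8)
The plan is to prove necessity by exhibiting explicit test configurations on which concavity (respectively convexity) must fail outside the stated parameter ranges. I would split the obstructions into two kinds. The \emph{commutative} obstructions, detected on simultaneously diagonalizable $A$ and $K$, will pin down the relation between $p$ and $s$ together with the sign of $p$; the \emph{genuinely noncommutative} obstructions, detected by a second-order analysis along a non-commuting perturbation, will pin down the admissible range of $p$ itself. Throughout I would exploit that the invertible matrices are dense, so that a hypothesis holding for all invertible $K$ passes, by continuity of the one-variable second derivative, to limiting choices of $K$ of lower rank.

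First the commutative step. Take $A=\operatorname{diag}(a,b)$ and $K=\operatorname{diag}(k_1,k_2)$ invertible, so that $\Gamma_{p,s}(A)[K]=(|k_1|^2a^p+|k_2|^2b^p)^s$. Fixing $b$ and letting $|k_2|\to 0$, the restriction to the $a$-direction converges to $|k_1|^{2s}a^{ps}$, whose concavity (convexity) in $a>0$ is equivalent to $0\le ps\le 1$ (respectively $ps\le 0$ or $ps\ge 1$). Since $s>0$ and $p\neq 0$, concavity forces $p>0$ and $s\le 1/p$, whereas convexity forces either $p<0$ or both $p>0$ and $s\ge 1/p$. This already supplies the exponent inequalities and the sign of $p$ in both parts; what remains is to confine $p$ to the correct interval.

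For the noncommutative step I would pass to a rank-one limit of $K$, for which $\Gamma_{p,s}(A)[K]$ is a positive multiple of $g(A)^s$ with $g(A):=\langle A^p v,v\rangle$ for a unit vector $v$. Along a line $A(t)=A+tH$ set $\Phi(t)=g(A(t))^s$, so that $\Phi''(0)=s\,g^{s-2}\big[(s-1)g'(0)^2+g(0)\,g''(0)\big]$. Working in the eigenbasis $A=\operatorname{diag}(\lambda_1,\lambda_2)$ with $v=e_1$, I would take the off-diagonal skew-Hermitian perturbation $H=\begin{pmatrix}0&i\\-i&0\end{pmatrix}$. The Daleckii--Krein formula then gives $g'(0)=0$ and $g''(0)=2f^{[2]}(\lambda_1,\lambda_2,\lambda_1)$, where $f(t)=t^p$ and the second divided difference satisfies $f^{[2]}(\lambda_1,\lambda_2,\lambda_1)=\tfrac12 f''(\xi_0)=\tfrac12 p(p-1)\xi_0^{p-2}$ for some $\xi_0$ between $\lambda_1$ and $\lambda_2$. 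Hence $\Phi''(0)$ has the sign of $p(p-1)$. In the concave case $\Phi''(0)\le 0$ forces $p\le 1$, completing part (a) inside $\cM_2$; in the convex case $\Phi''(0)\ge 0$ excludes $0<p<1$.

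The remaining bounds, the upper endpoint $p\le 2$ and the lower endpoint $p\ge -1$ in the convex case, are the delicate ones. They cannot be seen by the previous perturbation, for which $\Phi''(0)$ tracks the sign of $p(p-1)$ and so does not change as $p$ crosses $2$ or $-1$; they instead record the precise failure of \emph{operator} convexity of $t\mapsto t^p$, which holds exactly on $[-1,0]\cup[1,2]$. The strategy is to choose a two-dimensional perturbation $H$ at which $\tfrac{d^2}{dt^2}(A+tH)^p$ acquires a negative eigenvalue, with eigenvector $w$, and to feed $w$ into $g$ so that $g''(0)<0$. The obstacle is that this $H$ is no longer off-diagonal, so $g'(0)$ need not vanish, and since $s$ may be taken arbitrarily large the contribution $(s-1)g'(0)^2$ can otherwise rescue the inequality. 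One therefore needs to neutralize the first-order term while preserving the negative second-order eigenvalue, and it is this simultaneous requirement that forces the extra room of a $4\times 4$ ambient space, for instance via a block/doubling construction in which the first-order part cancels across the two blocks. Carrying out that cancellation and verifying that the surviving second-order term has the wrong sign is the main obstacle, and it is precisely what distinguishes the $\cM_4$ requirement in part (b) from the $\cM_2$ sufficiency in part (a).
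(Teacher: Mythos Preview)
The paper does not prove Theorem~\ref{thm:gamma-necessary}; it is quoted from Hiai~\cite{H13} and stated without proof, so there is no in-paper argument to compare your proposal against.

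On the proposal itself, two remarks. First, the commutative step contains a slip: for diagonal $A$ and $K$ one has
\[
\Gamma_{p,s}(A)[K]=|k_1|^{2s}a^{ps}+|k_2|^{2s}b^{ps},
\]
not $(|k_1|^2 a^p+|k_2|^2 b^p)^s$, since the $s$-th power is a matrix power taken before the trace. The error is harmless---the correct formula yields the constraint on $ps$ immediately, without any limit in $k_2$---and your Daleckii--Krein perturbation is set up correctly and does force $0<p\le 1$ in~(a) and exclude $0<p<1$ in~(b).

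Second, and more seriously, the proposal is incomplete. You correctly isolate the remaining constraints $p\le 2$ and $p\ge -1$ as the hard part, describe in words why a $4\times 4$ block construction should be needed to cancel $g'(0)$ while retaining a sign-violating $g''(0)$, and then stop. That construction is the entire content of item~(b) beyond what your $2\times 2$ analysis delivers; as written, your argument only shows that convexity for all invertible $K\in\cM_4$ forces $p\le 0$ or $p\ge 1$ (together with $s\ge 1/p$ when $p>0$), which does not yet yield $-1\le p$ or $p\le 2$. The explicit counterexamples in Hiai's original paper are what close this gap, and nothing in the present paper substitutes for them.
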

 
 After the data processing inequality for sandwiched RRE was proved for $\alpha\geq 1/2,$ the condition for equality was found by Leditzky et al \cite{LRD17} in 2017 for the same parameter range. It was shown that there is an equality in the data processing inequality if and only if
 \begin{equation}
 \sigma^{\frac{1-\alpha}{2\alpha}}(\sigma^{\frac{1-\alpha}{2\alpha}} \rho \sigma^{\frac{1-\alpha}{2\alpha}} )^{\alpha-1}\sigma^{\frac{1-\alpha}{2\alpha}}= \cN^*\left( \cN(\sigma)^{\frac{1-\alpha}{2\alpha}}\left[\cN(\sigma)^{\frac{1-\alpha}{2\alpha}} \cN(\rho) \cN(\sigma)^{\frac{1-\alpha}{2\alpha}} \right]^{\alpha-1}\cN(\sigma)^{\frac{1-\alpha}{2\alpha}} \right)\ .
 \end{equation}

\subsection{$\alpha-z$ R\'enyi relative entropy.} A more general class of R\'enyi  relative entropy that plays into the noncommutativity of quantum states was introduced by Audenaert and Datta \cite{AD15}. For $\alpha\neq 1$ and $z>0,$ the two-parameter family of $\alpha-z$ R\'enyi relative entropy ( $\alpha-z$ RRE) is defined as
\begin{equation}\label{def:alpha-z-RRE}
S_{\alpha,z}(\rho\|\sigma):=\frac{1}{\alpha-1}\log \left[\trace\left(\sigma^{(1-\alpha)/2z} \rho^{\alpha/z}\sigma^{(1-\alpha)/2z}\right)^z \right]\ .
\end{equation}
For commuting $\rho$ and $\sigma,$ the $\alpha-z$ RRE reduces to the classical R\'enyi divergence for all values of $z.$ Moreover, for $z=1,$ the $\alpha-z$ RRE becomes the quantum RRE, namely,
$S_{\alpha, 1}(\rho\|\sigma)=S_\alpha(\rho\|\sigma).$
And for $\alpha=z,$ the $\alpha-z$ RRE reduces to the sandwiched RRE,
$S_{\alpha,\alpha}(\rho\|\sigma)=\tilde{S}_\alpha(\rho\|\sigma). $
For a comprehensive discussion of various particular cases of the $\alpha-z$ RRE see \cite{AD15}.

In \cite{Z20} the precise parameter range for which the $\alpha-z$ RRE satisfies the data processing inequality was determined, resolving a conjecture from \cite{AD15}.  Carlen et al \cite{CFL18} showed that the data processing inequality is equivalent to the joint convexity/concavity property of the trace functional  appearing in the definition~\eqref{def:Psi} below.  Consider the following trace functional defined on $\cP_n\times \cP_n$ for a fixed operator $K,$ by
\begin{equation}\label{def:Psi}
\Psi_{p,q,s}(A,B)[K]:=\trace\left(B^{q/2}K^* A^p K B^{q/2} \right)^s, \ \ p,q,s\in\RR\ .
\end{equation}
Carlen et al \cite{CFL18} showed that  $\alpha-z$ RRE is monotone under quantum channels if and only if $\Psi_{p,q,s}(A,B)[I]$
is jointly concave for $\alpha<1$ and jointly convex for $\alpha>1,$ where $p=\frac{\alpha}{z},$ $q=\frac{1-\alpha}{z},$ and $s=\frac{1}{p+q}.$

The study of the  map $\Psi_{p,q,s}$ started with the seminal Lieb Concavity Theorem \cite{L73}, long before it was connected to the monotonicity of $\alpha-z$ RRE. Lieb's Concavity Theorem  states that for fixed $K$ and $0\leq p,q,\leq 1,$ $p+q\leq 1$ the function $\Psi_{p,q,1}(A,B)[K]$ is jointly convex in $A$ and $B.$ Following this work, a series of results studying the joint convexity of $\Psi_{p,q,s}$ using different methods  was produced \cite{A79, B04, CFL16,  CL08, E73, FL13, H13, H16, Z20}. Finally Zhang \cite{Z20} completed the classification of the parameter range where $\Psi_{p,q,s}$ is jointly convex/concave.  As with the map $\Gamma_{p,q},$ due to symmetry, it suffices to consider $s>0,$ since
\[
 \Psi_{p,q,s}(A,B)[K]=\Psi_{-p,-q,-s}(A,B)[(K^*)^{-1}]\ .
\]
For simplicity one may also consider $q\leq p,$ since
$\Psi_{p,q,s}(A,B)[K]=\Psi_{q,p,s}(B,A)[K^*]. $

\begin{theorem}[Zhang \cite{Z20}]\label{thm:Psi-conv} 
 Given an invertible operator $K,$ the function $\Psi_{p,q,s}(\cdot,\cdot)[K]$ is
\begin{enumerate}[(a)]
\item jointly concave if $0\leq q\leq p\leq 1$ and $0<s\leq \frac{1}{p+q}$;
\item jointly convex if $-1\leq q\leq p\leq 0$ and $s>0$;
\item jointly convex if $-1\leq q\leq 0,$ $1\leq p\leq 2,$ $(p,q)\neq (1,-1)$ and $s\geq \frac{1}{p+q}.$
\end{enumerate}
\end{theorem}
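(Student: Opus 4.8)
The plan is to disentangle the two distinct difficulties in $\Psi_{p,q,s}$ — the outer Schatten exponent $s$ and the inner sandwiched power map $(A,B)\mapsto\Phi(A,B):=B^{q/2}K^*A^pKB^{q/2}$ — and to collapse the former to a single critical value. First I would invoke the two symmetry identities recorded above, $\Psi_{p,q,s}(A,B)[K]=\Psi_{-p,-q,-s}(A,B)[(K^*)^{-1}]$ and $\Psi_{p,q,s}(A,B)[K]=\Psi_{q,p,s}(B,A)[K^*]$, to reduce to $s>0$ and $q\le p$, thereby cutting the analysis down to the three listed regimes.

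Second, I would push the exponent $s$ to the critical value $s_0=\tfrac1{p+q}$ by a composition argument. Writing $\Psi_{p,q,s}=\trace(\Phi^{s})=\trace\!\big((\Phi^{s_0})^{s/s_0}\big)$, everything reduces to showing that the \emph{threshold map} $\Phi_0:=\Phi^{1/(p+q)}$ is jointly operator concave (regime (a)) or jointly operator convex (regimes (b), (c)). Indeed, in regime (a) one has $0<s/s_0\le1$, and $t\mapsto t^{s/s_0}$ is operator monotone and operator concave, so $\Phi_0^{s/s_0}$ is again jointly operator concave and its trace is concave; in regimes (b), (c) one has $s/s_0\ge1$, and $Z\mapsto\trace(Z^{s/s_0})$ is a L\"owner-monotone convex functional, so composing it with the jointly operator convex $\Phi_0$ yields a jointly convex function. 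Regime (b), where $p+q\le0$, is first moved to positive exponents by the inversion symmetry before this scheme is applied. This collapses the whole family to the single homogeneous case $s=1/(p+q)$.

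Third — and this is the crux — I would establish the joint operator convexity/concavity of $\Phi_0$ at the threshold. Here I would attempt an operator-level integral representation of the power map $X\mapsto X^{1/(p+q)}$ (the L\"owner representation for exponents in $(0,1)$, or its convex counterpart for exponents in $(1,2)$), combined with Lieb's concavity theorem (for $0\le q\le p\le1$, $p+q\le1$) and Ando's convexity theorem (for $-1\le q\le p\le 0$, and for $1\le p\le2$, $-1\le q\le0$) applied to the bilinear building blocks of the shape $\trace\!\big(K^*A^pK\,\Theta(B)\big)$ that appear. As both a consistency check and an anchor, the one-variable slices $\Psi_{p,q,s}(A,I)[K]=\Gamma_{p,s}(A)[K]$ and $\Psi_{p,q,s}(I,B)[K]=\Gamma_{q,s}(B)[K^*]$ must reproduce exactly the convexity behaviour of Theorem~\ref{thm:gamma}.

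The main obstacle is precisely this third step. The bilinear building blocks are \emph{not} literally of the form $\trace(A^pXB^qX^*)$ covered by Lieb/Ando, because of the noncommuting sandwich $B^{q/2}(\cdot)B^{q/2}$ together with the simultaneous presence of the fixed $K$ and of the auxiliary variable from the power representation; marrying the per-variable convexity furnished by those theorems with the representation to obtain genuine \emph{joint} convexity is the delicate part, and is where Epstein-type analytic continuation (showing the relevant scalar function is Pick/Herglotz, so that it extends to a matrix monotone or matrix convex function) may be required as a substitute when the representation is not tight. Finally I would dispatch the degenerate locus $p+q=0$ and the excluded endpoint $(p,q)=(1,-1)$ by hand, since there the critical exponent $s_0$ blows up and the reduction of the second step must be replaced by a direct estimate.
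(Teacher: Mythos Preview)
This theorem is not proved in the paper: it is stated with attribution to Zhang~\cite{Z20} and used as a background result, so there is no proof in the paper to compare against.

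As a sketch of Zhang's argument, your proposal correctly isolates the symmetry reductions and the composition step that pushes $s$ to the critical value $s_0=1/(p+q)$; these are standard. But your Step~3 is the entire substance of the theorem, and here the proposal names the difficulty without resolving it. You correctly observe that the sandwiched expression $B^{q/2}K^*A^pKB^{q/2}$ is not of the shape $\trace(A^pXB^qX^*)$ to which Lieb/Ando apply directly, and you fall back on ``Epstein-type analytic continuation may be required,'' which is not a plan. Zhang's actual breakthrough is a variational representation of the threshold quantity $\trace(\Phi(A,B))^{1/(p+q)}$ that separates the two variables: roughly, one writes the Schatten-type trace as an extremum over an auxiliary positive operator $Z$ of an expression that is a sum of a $\Gamma_{p,s'}$-term in $A$ and a $\Gamma_{q,s''}$-term in $B$ (each with $Z$ absorbed into the $K$ slot). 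Joint convexity/concavity then follows from convexity/concavity in each variable separately via Theorem~\ref{thm:gamma}, together with the usual fact that a pointwise infimum of concave functions is concave (respectively, supremum of convex is convex). Your outline does not identify this variational decoupling, which is the key idea; the Lieb/Ando/Epstein route you gesture at was the approach of the earlier partial results and does not by itself reach the full parameter range.
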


The necessity of these parameter conditions was shown even before Theorem \ref{thm:Psi-conv} was completed.

\begin{theorem}[Hiai \cite{H13}, Carlen, Frank, Lieb \cite{CFL18}] \label{thm:Psi-conv-nec}  
Let $q\leq p$ and $s>0$ be given. Suppose that $(p,q)\neq(0,0)$ and $K=I.$
\begin{enumerate}[(a)]
\item If $\Psi_{p,q,s}$ is jointly concave on $\CC^2,$ then $0\leq q\leq p\leq 1$ and $0<s\leq \frac{1}{p+q}.$
\item \label{i:Psi-conv-nec2} 
   If $\Psi_{p,q,s}$ is jointly convex on $\CC^4,$ then either $-1\leq q\leq p\leq 0$ and $s>0$ or $-1\leq q\leq 0,$ $1\leq p\leq 2,$ $(p,q)\neq (1,-1)$ and $s\geq \frac{1}{p+q}$\ .
\end{enumerate}
\end{theorem}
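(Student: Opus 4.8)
The plan is to deduce this two-variable necessity result from the one-variable statement Theorem~\ref{thm:gamma-necessary} for $\Gamma_{p,s}$, exploiting the fact that the second argument of $\Psi$ can be frozen to manufacture an arbitrary conjugating operator. Concretely, since a slice of a jointly convex (respectively concave) function is convex (concave), fixing $B=B_0\in\cP_n$ shows that $A\mapsto\trace\bigl(B_0^{q/2}A^pB_0^{q/2}\bigr)^s=\Gamma_{p,s}(A)[B_0^{q/2}]$ is convex (concave). As $B_0$ ranges over $\cP_n$ and $q\neq 0$, the positive operator $B_0^{q/2}$ ranges over all of $\cP_n$; writing an arbitrary invertible $K=UP$ in polar form and using $U^*A^pU=(U^*AU)^p$ together with the substitution $A\mapsto U^*AU$, I would upgrade this to convexity (concavity) of $\Gamma_{p,s}(\cdot)[K]$ for every invertible $K\in\cM_n$. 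Taking $n=2$ in the concave case and $n=4$ in the convex case, Theorem~\ref{thm:gamma-necessary} then returns the permissible range of $p$, and the symmetry $\Psi_{p,q,s}(A,B)[I]=\Psi_{q,p,s}(B,A)[I]$ gives the same information for $q$.

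Next I would sharpen the constraint on $s$. The function is positively homogeneous, $\Psi_{p,q,s}(tA,tB)[I]=t^{(p+q)s}\Psi_{p,q,s}(A,B)[I]$ for $t>0$, so restricting to the diagonal ray $t\mapsto\Psi_{p,q,s}(tA_0,tB_0)[I]$ reduces joint convexity (concavity) to convexity (concavity) of $t\mapsto t^{(p+q)s}$ on $(0,\infty)$. This forces $0\le(p+q)s\le 1$ in the concave case and $(p+q)s\le 0$ or $(p+q)s\ge 1$ in the convex case, which are exactly the stated thresholds $s\le\frac{1}{p+q}$ and $s\ge\frac{1}{p+q}$; note that this yields the sharp denominator $p+q$ rather than the weaker $p$ coming from the $\Gamma$ reduction.

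To pin down which sign pattern of $(p,q)$ can occur --- in particular to force $q\ge 0$ in the concave case, to rule out $p,q$ both positive in the convex case, and to eliminate the boundary point $(p,q)=(1,-1)$ --- I would restrict to scalar matrices $A=aI$, $B=bI$, for which $\Psi_{p,q,s}$ becomes a positive multiple of the Cobb--Douglas function $a^{ps}b^{qs}$. Its $2\times 2$ Hessian has diagonal entries proportional to $ps(ps-1)$ and $qs(qs-1)$ and determinant proportional to $ps\,qs\,\bigl(1-(p+q)s\bigr)$; reading off the definiteness conditions gives the remaining inequalities. For instance at $(1,-1)$ the determinant is proportional to $-s^2<0$, so the Hessian is indefinite and joint convexity is impossible, explaining the exclusion; similarly $ps,qs\ge 1$ with $(p+q)s>1$ forces a negative determinant, ruling out two positive exponents. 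Combining this with the $p$- and $q$-ranges from the $\Gamma$ reduction, the ordering $q\le p$, and the homogeneity bound on $s$ assembles both parts of the theorem.

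The step I expect to require the most care is the reduction itself, i.e.\ verifying that freezing $B$ and conjugating by a unitary genuinely realizes every invertible $K$ in the correct dimension, so that the full force of Theorem~\ref{thm:gamma-necessary} (which quantifies over all invertible $K\in\cM_2$, respectively $\cM_4$) can be imported; this is precisely where the extra variable $B$ substitutes for the conjugator absent in the one-variable problem. The remaining delicate points are purely bookkeeping: the degenerate axes $pq=0$, where one variable drops out and the sandwich $B_0^{q/2}$ can no longer synthesize a general $K$, must be handled directly through the single-matrix function $\trace(A^p)^s$ and the scaling and scalar arguments above, and the two boundary thresholds $(p+q)s\in\{0,1\}$ must be examined separately since there the scalar Hessian degenerates.
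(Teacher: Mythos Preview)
The paper does not contain a proof of Theorem~\ref{thm:Psi-conv-nec}; the result is quoted from Hiai~\cite{H13} and Carlen--Frank--Lieb~\cite{CFL18} and used without argument, so there is no in-paper proof to compare your proposal against.

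That said, your strategy is sound and is essentially the route taken in the cited sources: freeze one variable to reduce joint convexity/concavity of $\Psi_{p,q,s}$ to that of the one-matrix functional $\Gamma_{p,s}$, use the polar decomposition $K=UP$ together with $U^*A^pU=(U^*AU)^p$ to upgrade positive conjugators $B_0^{q/2}$ to arbitrary invertible $K$ so that Theorem~\ref{thm:gamma-necessary} applies, and then sharpen the bound on $s$ via the homogeneity $\Psi_{p,q,s}(tA,tB)=t^{(p+q)s}\Psi_{p,q,s}(A,B)$ and the scalar Hessian of $(a,b)\mapsto a^{ps}b^{qs}$ to fix the sign pattern and exclude $(p,q)=(1,-1)$. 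All of these steps are correct as stated.

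One caveat you should record explicitly: the reduction to $\Gamma$ needs the \emph{other} exponent to be nonzero in order for $B_0^{q/2}$ (respectively $A_0^{p/2}$) to sweep out all of $\cP_n$, so constraining $p$ requires $q\neq0$ and conversely. You flag the axis $pq=0$ as needing separate treatment, but note that there the function degenerates, e.g.\ $\Psi_{p,0,s}(A,B)[I]=\trace A^{ps}$ depends only on the product $ps$; in particular $p=2$, $q=0$, $s=\tfrac14$ yields the concave map $A\mapsto\trace A^{1/2}$ on $\cP_2$, so the bound $p\le 1$ in part~(a) cannot be recovered on that axis. This is a boundary artifact of how the theorem is transcribed in the paper rather than a gap in your argument.
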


Theorem \ref{thm:Psi-conv} together with the equivalence proved in \cite{CFL18} immediately implies that $\alpha-z$ RRE is monotone under quantum channels if and only if one of the following holds:

\begin{enumerate}[(a)]
\item $0<\alpha<1$ and $z\geq \max\{\alpha,1-\alpha\}$;
\item $1<\alpha\leq 2$ and $\alpha/2\leq z\leq \alpha$;
\item $2\leq \alpha<\infty$ and $\alpha-1\leq z\leq \alpha.$
\end{enumerate}

Unlike with Umegaki, R\'enyi and sandwiched relative entropies, there is 
 no known algebraic condition certifying equality in the data processing inequality for the $\alpha-z$ RRE.
 However, different necessary and sufficient conditions have been found.

Chehade \cite{Che20} showed that for $1<\alpha\leq 2$ and $\alpha/2\leq z\leq \alpha,$ if there is  equality in the data processing inequality for $\alpha-z$ RRE and  the following trace function  defined on positive operators
\[
\Uplambda_{p}(A)[K, M] :=\trace\left[\left\{K^* A^{p/2} M A^{p/2} K\right\}^{1/p} \right]
\]
is concave for $p=\frac{\alpha-z}{\alpha} \in (0,1)$ and any positive definite $M$ and any invertible $K,$ then the states satisfy the following condition
\[
  \sigma^{\frac{1-z}{2z}}\left(\sigma^{\frac{1-\alpha}{2z}} \rho^{\frac{\alpha}{z}} \sigma^{\frac{1-\alpha}{2z}} \right)^{z-1}\sigma^{\frac{1-z}{2z}}= \cN^*\left( \cN(\sigma)^{\frac{1-z}{2z}}\left[\cN(\sigma)^{\frac{1-\alpha}{2z}} \cN(\rho)^{\frac{\alpha}{z}}  \cN(\sigma)^{\frac{1-\alpha}{2z}} \right]^{z-1}\cN(\sigma)^{\frac{1-z}{2z}} \right)\ .
\]
Below, as a consequence of Theorem \ref{thm:main}, we show that $\Uplambda_p$ is never concave.
For emphasis, the assumptions used in \cite{Che20} imply $1<z\leq\alpha\leq 2z,$ and hence $0<p<1.$

To finish the picture, necessary and sufficient conditions for equality in the DPI for  $\alpha-z$ RRE were found by Zhang \cite{Z20-2}.  Namely,
\[
\sigma^{\frac{1-\alpha}{2z}}\left(\sigma^{\frac{1-\alpha}{2z}} \rho^{\frac{\alpha}{z}} \sigma^{\frac{1-\alpha}{2z}} \right)^{\alpha-1}\sigma^{\frac{1-\alpha}{2z}}= \cN^*\left( \cN(\sigma)^{\frac{1-\alpha}{2z}}\left[\cN(\sigma)^{\frac{1-\alpha}{2z}} \cN(\rho)^{\frac{\alpha}{z}}  \cN(\sigma)^{\frac{1-\alpha}{2z}} \right]^{\alpha-1}\cN(\sigma)^{\frac{1-\alpha}{2z}} \right)\ . 
\]
Moreover, different conditions were found in \cite{Che20, Z20-2} that imply the equality in the DPI.

\subsection{Operator convexity.} The following trace function was considered by Carlen et al \cite{CFL16}
\begin{equation}\label{def:trace-function}
\Omega_{p,q,r}(A,B,C)=\trace A^{q/2}B^pA^{q/2}C^r .
\end{equation}
These authors showed that if, for fixed  $p,q,r\neq 0,$ the function $\Omega_{p,q,r}$ is jointly convex,  then the following operator function
\begin{equation}\label{def:operator-function}
(A,B)\mapsto  A^{q/2}B^pA^{q/2}
\end{equation}
is operator convex.

It has long been known \cite{LR74} that the function in \eqref{def:operator-function} is jointly convex for $q=2$ and $-1\leq p<0.$ As it turns out \cite{CFL16}, this is the only parameter range where this function is jointly operator convex, and the function is never jointly operator concave.  It follows that $\Omega_{p,q,r}$ is never concave, and it is convex if and only if $q=2,$ $p,r<0$ and $-1\leq p+r<0.$ The later case was proved previously in \cite{L73}.

\subsection{Main results}
To investigate the stability of convexity (concavity) of the trace functions $\Gamma_{p,s}$ and $\Psi_{p,q,s}$ as well as the convexity (concavity) of $\Uplambda_p,$ we introduce the following trace function  defined on positive operators
\[
\Lambda_{r,s}(A)[K, M]: = \trace\left[\left\{K^*A^r M A^r K\right\}^s\right], \  \ r,s\in\RR.
\]
Crucial in establishing our main results is the interplay between $\Lambda_{r,s}$ and the other trace functions introduced so far. Namely, 
\begin{align}
	\Lambda_{r,s}(A)[K,I]&= \Gamma_{2r,s}(A)[K], \label{eq:Lambda-Gamma} \\
	\Lambda_{r,s}(A)[I, M]&=\Psi_{1,2r,s}(M,A)[I], \label{eq:Lambda-Psi} \\
	\Lambda_{r,1}(A)[K,M] & =  \Omega_{1, 2r, 1}(A,M,KK^*),\ 
\nonumber  \\
	\Lambda_{\frac p2,\frac1p}(A)[K,M] & =\Uplambda_p(A)[K,M] .
\nonumber
\end{align}

Our main result is Theorem \ref{th:stab - gamma}, stating that although for fixed K, $\Lambda_{r,s}(\cdot)[K, I]=\Gamma_{2r,s}(\cdot)[K]$ is convex (concave) in some parameter range by Theorem \ref{thm:gamma}, the convexity (concavity) is lost for specific choices of $M$ in any small neighbourhood of the identity.  Corollary \ref{cor:stab - Psi} then gives the analog of Theorem \ref{th:stab - gamma} for the function $\Psi.$ For fixed $M$,
 the  function  $\Lambda_{r,s}(\cdot)[I, M]=\Psi_{1,2r,s}(M,\cdot)[I]$ is convex (concave) in some parameter range by Theorem \ref{thm:Psi-conv}, but in  any small neighbourhood of the identity there is some fixed $K$ such that  $\Lambda_{r,s}(\cdot)[K,M]$ fails to be convex (concave).

As a consequence of \eqref{eq:Lambda-Gamma} and Theorem \ref{th:stab - gamma}, we obtain the convexity (concavity) range of the function $\Lambda$ itself. The latter is explained in Section \ref{sec3}, where we show that the trace function $\Lambda_{r,s}$ is  convex only when $r=1$ and $s\geq 1/2,$ and that it is never concave.

Independently of this work, Zhang \cite{Z21} considered the same function $\Lambda_{r,s}(A)[K,M]$ (under the notation $\Psi_{p,s}(A)$) and showed that:
	
	(1) for any $r\neq 0$ and $s>1,$ it is not concave in $A$ for all $K$ and $M$; 
	
	(2) for any $1/2\leq r<1$ and $1/2p\leq s<1,$ the function $\Lambda_{r,s}$ is not convex in $A.$ 

\noindent
We complement these results and give the complete parameter range for which $\Lambda_{r,s}$ is convex (concave).

%%%%%%%%%%%%%%%%%%
\section{Stability of convexity of the trace functions $\Gamma_{p,s}$ and $\Psi_{p,q,s}$}
\label{sec2}
Let $\cM_n$ denote the set of complex $n\times n$ matrices, $\cP_n\subset \cM_n$ denote the subset of positive definite matrices, and $\cI_n\subset\cM_n$ denote the subset of invertible matrices.

For $r,s\in\RR,$ matrices $A, M\in \cP_n$ and $K\in \cI_n,$ the function $\Lambda$ is defined as follows
\begin{equation}\label{def:Lambda}
	\Lambda_{r,s}(A)[K, M]: = \trace\left[\{K^*A^r M A^r K\}^s\right].
\end{equation}

\begin{remark}
Note that,  due to the symmetry
\[
 \Lambda_{r,s}(A)[K,M]=\Lambda_{-r,-s}(A)[(K^*)^{-1}, M^{-1}],
\]
it is sufficient to consider the case $s>0$ when studying convexity (concavity) properties of $\Lambda_{r, s}.$
\end{remark}

\subsection{Stability of convexity of the trace function $\Gamma_{p,s}$}

As we noted in \eqref{eq:Lambda-Gamma}, fixing $M=I$ in the function $\Lambda$ results in the function $\Gamma;$ that is,
\[
 \Lambda_{r,s}(A)[K,I]=\Gamma_{2r,s}(A)[K]\ . 
\]
Therefore, when $M=I,$ Theorem~\ref{thm:gamma} provides large parameters range where $\Lambda$ is convex and concave. Our main result, Theorem \ref{th:stab - gamma}, asserts that convexity and concavity are not stable as functions of $M.$  In other words,  there are arbitrarily small perturbations of $M$ away from the identity that result in loss of concavity (or convexity) of $\Lambda.$ In the sequel we let $\|T\|$ denote the operator norm of a matrix
 $T.$

\begin{theorem}\label{th:stab - gamma} 
 For each  $r\neq 0,1$ and $s>0$ and each  $\epsilon > 0,$  there exists an invertible matrix $K\in \cI_2$ and a positive definite matrix $M\in \cP_2$  satisfying $\|M-I\|<\epsilon$  such that  $\Lambda_{r,s}(\cdot) [K,M]$  is neither  concave nor convex. In fact, $K$ and $M$ can be chosen as
\[
 K=\begin{pmatrix} 1 & 0\\0 & k \end{pmatrix}, \ \ \
  M=\begin{pmatrix} 1 &t \\t & 1 \end{pmatrix}
\]
 for suitably small positive $k$ and any choice of $|t|<\epsilon.$
\end{theorem}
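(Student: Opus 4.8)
The plan is to refute convexity and concavity at one stroke by producing a single point at which the Hessian of $A\mapsto\Lambda_{r,s}(A)[K,M]$ is indefinite. Because convexity (respectively concavity) on $\cP_2$ would force the second directional derivative to be nonnegative (respectively nonpositive) in every Hermitian direction, and in particular along real symmetric directions, it suffices to restrict to real symmetric $A$ and real symmetric perturbations $H$ and to exhibit one direction of positive and one of negative second derivative. I will do this at the base point $A=I$.

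The first step is to use the degeneration $k\to 0$ to isolate a $k$-independent leading term. Writing $P=A^rMA^r$ and $N=K^*PK$ with $K=\mathrm{diag}(1,k)$, the eigenvalues of the $2\times 2$ matrix $N$ are $\lambda_+=P_{11}+O(k^2)$ and $\lambda_-=O(k^2)$, whence
\[
\Lambda_{r,s}(A)[K,M]=\trace(N^s)=\varphi(A)+O\!\left(k^{\min(2,2s)}\right),\qquad \varphi(A):=\big(e_1^*A^rMA^r e_1\big)^s .
\]
The point is that $\varphi$ does not depend on $k$. I will check that this expansion survives two $\tau$-differentiations along a fixed line $A=I+\tau H$: the power $\lambda_-^s=O(k^{2s})$ and its $\tau$-derivatives remain $O(k^{2s})$ because the $\tau$-derivatives of the smooth, order-$k^2$ eigenvalue $\lambda_-$ are again order $k^2$, while the correction to $\lambda_+^s$ is smooth and of order $k^2$. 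Hence for each fixed $H$ the second derivative of $\Lambda_{r,s}(\cdot)[K,M]$ at $I$ converges to that of $\varphi$ as $k\to 0$, and it is enough to show that $\varphi$ already has an indefinite Hessian at $I$ and then to choose $k$ small.

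The core is a direct computation of the Hessian of $\varphi$ at $I$. Using $(I+\tau H)^r=I+\tau rH+\tfrac12\tau^2 r(r-1)H^2+O(\tau^3)$ together with the chain rule $D^2\varphi=sP_{11}^{s-1}D^2P_{11}+s(s-1)P_{11}^{s-2}(DP_{11})^2$ and $P_{11}(I)=1$, I expect to obtain $D^2\varphi(I)[H,H]$ as an explicit quadratic form in the real entries $(H_{11},H_{12},H_{22})$. The decisive structural feature is that $H_{22}$ enters this form only through a single cross term: the coefficient of $H_{12}H_{22}$ equals $2srt(r-1)$, while the coefficients of $H_{22}^2$ and of $H_{11}H_{22}$ both vanish, since $DP_{11}$ does not involve $H_{22}$ and, in $D^2P_{11}$, the variable $H_{22}$ occurs only linearly and only paired with $H_{12}$. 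Under $s>0$, $t\neq 0$, $r\neq 0$ and $r\neq 1$ this coefficient is nonzero, so fixing $H_{11}=0$, $H_{12}=1$ renders $D^2\varphi(I)[H,H]$ an affine function of $H_{22}$ with nonzero slope; picking $H_{22}$ on either side of its zero yields two fixed directions $H_+,H_-$ with $D^2\varphi(I)[H_+,H_+]>0$ and $D^2\varphi(I)[H_-,H_-]<0$. Substituting these $k$-independent directions into the expansion of the previous paragraph and taking $k$ small enough preserves the two signs, which is precisely the asserted failure of both convexity and concavity. The hypothesis $r\neq 1$ is used here exactly to keep the second-order term $r(r-1)H^2$ of $A\mapsto A^r$ present; at $r=1$ the map is affine in $A$ and $\Lambda_{1,s}$ is in fact convex for $s\ge 1/2$, so instability cannot hold.

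The step I expect to be most delicate is not the signed quadratic form, which is algebra, but the justification that the remainder $O(k^{\min(2,2s)})$ really persists after two $\tau$-differentiations, since derivatives of $\lambda_-^s$ carry negative powers of the small eigenvalue $\lambda_-=O(k^2)$; one must verify that these are compensated by the order-$k^2$ size of $\dot\lambda_-,\ddot\lambda_-$, so that no $H_{22}^2$ term of size larger than $k^{\min(2,2s)}$ appears to rescue definiteness once $H_\pm$ are fixed. The bookkeeping is tightest in the regime $0<s<1$, where the exponent $\min(2,2s)=2s$ governs the error.
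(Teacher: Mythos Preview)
Your approach is correct and follows a genuinely different route from the paper's argument, though both share the same first reduction.

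Both proofs begin by degenerating $K=\mathrm{diag}(1,k)$ to the rank-one projection $K_0=\mathrm{diag}(1,0)$, where $\Lambda_{r,s}(A)[K_0,M]$ collapses to the scalar function $\varphi(A)=(e_1^*A^rMA^re_1)^s$, and then recover invertible $K$ at the end. From here the arguments diverge. The paper tests midpoint convexity with a \emph{finite} triple: it fixes $A_1=\mathrm{diag}(1,b)$ away from the identity, lets $A_2(x)=I+x\bigl(\begin{smallmatrix}0&1\\1&0\end{smallmatrix}\bigr)$, sets $A=\tfrac12(A_1+A_2)$, and shows that the defect $\Xi(x)=\varphi(A_1)+\varphi(A_2)-2\varphi(A)$ has nonzero derivative $\Xi'(0)=2st\,g_r(b)$, where $g_r(b)=r-\tfrac{2}{b-1}\bigl((1+\tfrac{b-1}{2})^r-1\bigr)$. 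A separate convexity argument in $r$ is then used to show $g_r(0)\ne 0$ for $r\neq 0,1$, so $b$ can be chosen with $g_r(b)\ne 0$. Your argument instead stays infinitesimal at $A=I$, computes the Hessian of $\varphi$, and exploits the structural fact that $H_{22}$ enters only through the cross term $2sr(r-1)t\,H_{12}H_{22}$.

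What your approach buys: it dispenses with the auxiliary parameter $b$ and the function $g_r$ entirely, and it makes the role of the hypothesis $r\neq 0,1$ completely transparent, since the decisive coefficient is literally $r(r-1)$ times a nonzero factor. What the paper's approach buys: because it works with fixed finite points, passing back from $k=0$ to $k>0$ needs only continuity of $\trace(N^s)$ in $k$ (valid since $s>0$), whereas you must check that the $O(k^{\min(2,2s)})$ remainder survives two $\tau$-derivatives. Your last paragraph identifies exactly the right issue, and the resolution you sketch is correct: writing $\lambda_-=k^2\det(P)/\lambda_+$ with $\det(P)>0$ and $\lambda_+>0$ near $A=I$ shows $\lambda_-^s=k^{2s}\mu^s$ for a smooth positive $\mu$, so its $\tau$-derivatives are $O(k^{2s})$ with no blow-up. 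One small remark: both proofs actually require $t\neq 0$ (your coefficient $2srt(r-1)$ and the paper's $\Xi'(0)=2st\,g_r(b)$ both vanish at $t=0$), so ``any $|t|<\epsilon$'' in the statement should be read as any \emph{nonzero} such $t$.
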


\begin{corollary}
For any $\epsilon>0$ there exists a positive definite matrix $M\in \cP_2$
 satisfying $\|M-I\|<\epsilon$ such that
\begin{enumerate}[(a)]
 \item if $0< r\leq 1/2$ and $0<s\leq \frac{1}{2r},$ then  there is an invertible $K\in \cI_2,$ for which   $\Lambda_{r,s}(\cdot)[K,M]$ is 
    not concave; 
  \item  if $ r< 0$ and $s>0,$ then there is an invertible $K\in \cI_2$ such that  $\Lambda_{r,s}(\cdot)[K,M]$ is  not convex; 
 \item if  $1/2\leq r< 1$ and $s\geq \frac{1}{2r},$ then  there is an invertible $K\in \cI_2,$ for which $\Lambda_{r,s}(\cdot)[K,M]$ {is not convex}.
 \end{enumerate}
\end{corollary}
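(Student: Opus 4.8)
The corollary is a specialization of Theorem~\ref{th:stab - gamma}, so the plan is to extract the three one-sided conclusions from the two-sided conclusion ``neither concave nor convex,'' taking care that a \emph{single} $M$ serves simultaneously in all three cases. First I would fix $\epsilon>0$, choose any $t$ with $0<|t|<\min\{\epsilon,1\}$, and set $M=\begin{pmatrix}1&t\\ t&1\end{pmatrix}$. Then $\|M-I\|=|t|<\epsilon$, and $M\in\cP_2$ since its eigenvalues $1\pm t$ are positive. The key point enabling the quantifier order in the statement is that Theorem~\ref{th:stab - gamma} permits \emph{any} admissible $t$, with the parameter $k$ of $K$ chosen afterwards depending on $(r,s)$; hence this one matrix $M$ may be reused for every $(r,s)$ in the three ranges.

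Next I would treat the three ranges, checking in each that $r\neq 0,1$ so that Theorem~\ref{th:stab - gamma} applies. In (a) we have $0<r\le 1/2$, in (b) $r<0$, and in (c) $1/2\le r<1$, so in all three $r\neq0,1$, and $s>0$ throughout. Theorem~\ref{th:stab - gamma} then furnishes an invertible $K=\operatorname{diag}(1,k)\in\cI_2$ (with $k>0$ small, depending on $r,s$) for which $\Lambda_{r,s}(\cdot)[K,M]$ is neither concave nor convex. Reading off the relevant half of this statement yields exactly the claim: not concave in (a), not convex in (b), not convex in (c).

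It is worth recording why these particular halves are the meaningful ones, via \eqref{eq:Lambda-Gamma} and Theorem~\ref{thm:gamma}. Writing $p=2r$, we have $\Lambda_{r,s}(\cdot)[K,I]=\Gamma_{2r,s}(\cdot)[K]$, which by Theorem~\ref{thm:gamma} is concave in range (a) (there $0\le 2r\le1$ and $s\le 1/(2r)=1/p$) and convex in range (c) (there $1\le 2r\le 2$ and $s\ge 1/p$); in range (b), for $-1\le 2r<0$, it is convex by Theorem~\ref{thm:gamma}(b). Thus in these ranges the unperturbed ($M=I$) function already enjoys the property that the corollary shows is destroyed by an arbitrarily small perturbation of $I$. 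Case (b) is stated for all $r<0$, including $2r<-1$ where $\Gamma_{2r,s}$ need not be convex at all; this causes no difficulty, since ``not convex'' follows from the two-sided conclusion of Theorem~\ref{th:stab - gamma} regardless of the base behaviour.

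The corollary therefore presents no genuine obstacle beyond Theorem~\ref{th:stab - gamma}: the only points needing attention are the uniform-in-$(r,s)$ choice of $M$ above and the trivial verification $r\neq0,1$. For completeness I note how I would attack the substantive input, Theorem~\ref{th:stab - gamma}: compute the second directional derivative of $\tau\mapsto\Lambda_{r,s}(I+\tau H)[K,M]$ at $\tau=0$ as a quadratic form in the Hermitian direction $H$, using $(I+\tau H)^r=I+r\tau H+\tfrac{r(r-1)}{2}\tau^2H^2+\cdots$ together with the Daleckii--Krein formula for the second derivative of $X\mapsto\trace X^{s}$ at $X_0=K^*MK=\begin{pmatrix}1&tk\\ tk&k^2\end{pmatrix}$. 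The decisive feature is a cross term proportional to $h_{12}h_{22}$ with \emph{no} $h_{22}^2$ term, whose coefficient is a nonzero multiple of $t\,r(r-1)$ precisely when $t\neq0$ and $r\neq0,1$; letting $h_{22}\to\pm\infty$ then forces both signs. For $s<1$ this is reinforced by a divided-difference contribution that blows up like $k^{2s-2}$ as the smaller eigenvalue of $X_0$ (of order $k^2$) tends to $0$, i.e.\ $k\to0^+$, which is the role of choosing $k$ ``suitably small.'' Controlling these competing asymptotics in $k$ across the regimes $s<1$, $s=1$, $s>1$ is the real work, and is exactly what Theorem~\ref{th:stab - gamma} supplies.
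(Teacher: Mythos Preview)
Your derivation of the corollary from Theorem~\ref{th:stab - gamma} is correct and is exactly the paper's treatment: the corollary is stated without separate proof, as an immediate consequence. Your attention to the quantifier order---that Theorem~\ref{th:stab - gamma} allows $M=\begin{pmatrix}1&t\\ t&1\end{pmatrix}$ for \emph{any} $|t|<\epsilon$, so a single $M$ can be fixed before $(r,s)$ and $K$---is precisely the point, and your check that $r\neq 0,1$ in all three ranges is the only verification needed.

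Your closing sketch of how you would prove the underlying Theorem~\ref{th:stab - gamma} (Hessian at $A=I$ via Daleckii--Krein, isolating an indefinite cross term proportional to $t\,r(r-1)$) is a genuinely different route from the paper's. The paper instead evaluates the convexity defect $\Xi(x)=\Lambda(A_1)+\Lambda(A_2)-2\Lambda(A)$ on the explicit test pair $A_1=\operatorname{diag}(1,b)$, $A_2=I+x\bigl(\begin{smallmatrix}0&1\\1&0\end{smallmatrix}\bigr)$, expands $A^r$ and $A_2^r$ via the binomial series, and obtains $\Xi(x)=2stx\,g_r(b)+o(x)$ with $g_r(b)=r-\tfrac{2}{b-1}\bigl((1+\tfrac{b-1}{2})^r-1\bigr)$; since $g_r(0)=r+2^{1-r}-2\neq 0$ for $r\neq 0,1$, choosing the sign of $x$ flips the sign of $\Xi$. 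This first-order-in-$x$ argument is more elementary than a full Hessian analysis (no divided differences or regime split on $s$), at the cost of being tied to one test direction. Either way, that discussion is extraneous to the corollary itself.
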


\iffalse
Recall, for $r\in\RR$ and $\|X\|<1,$ the convergent power series expansion
(generalized  binomial theorem)
\[
(I+X)^r =\sum_{k=0}^\infty \binom{r}{k} X^k,
\]
valid for $\|X\|<1.$
\fi

In the upcoming proof we make use of the convergent power series expansion
(generalized  binomial theorem)
\[
(I+X)^r =\sum_{k=0}^\infty \binom{r}{k} X^k,
\]
valid for $\|X\|<1$ and $r \in \RR$.

\begin{proof}[Proof of Theorem \ref{th:stab - gamma}]
 Fix $r,s,t\in \RR$ with $0\ne r\ne 1$ and $s>0$ and  $|t|<\epsilon.$ Define, for  $0\le b<\frac12,$
\[
 g_r(b):=r - \frac{2}{b-1}\Big(\Big(1+\frac{b-1}{2}\Big)^r - 1 \Big). 
\]
 In particular, $g_r$ is continuous and
\[
 g_r(0)=  r +2^{1-r} -2
\]
Computing the second derivative of $g_r(0)$ as a function of $r$ shows that this function is strictly convex, hence can have at most two zeros. They occur at $r=0$ and $r=1.$ Moreover, $g_r(0)$ is negative for $r \in (0,1)$ and positive for $r<0$ and $r>1.$ Thus, for our fixed $r\neq 0, 1,$  there exists a $b>0$ sufficiently close to $0$ such that
 $g_r(b)\ne 0.$ Fix such a $b.$

Let
\begin{equation}\label{eq:M-K}
 M=M(t)=\begin{pmatrix} 1&t\\t&1\end{pmatrix} \quad \text{ and } \quad K=\begin{pmatrix}1 &0\\0&0\end{pmatrix}. 
\end{equation}
In particular, $\|M-I\|<\epsilon.$ We discuss the adaptation to an invertible $K$ at the end of the proof.

For $0<|x|<\frac12,$ let
\[
A_1=A_1(b)=\begin{pmatrix} 1 & 0 \\0& b\end{pmatrix}, \ \ \
A_2 =A_2(x) = I+ x \begin{pmatrix}0&1\\1&0\end{pmatrix},
\]
 and 
\[
  A:= \frac{A_1+A_2}{2} = I + \frac12 \begin{pmatrix} 0&x\\x & b-1 \end{pmatrix}
  =: I+\frac12 T.
\]
 
Consider the function  $\Xi_{r,s}$ defined by 
\[
\Xi(x)= \Xi_{r,s}(x)[t,b]=\Lambda_{r,s}(A_1)[K,M] +\Lambda_{r,s}(A_2)[K,M]-2\Lambda_{r,s}(A)[K,M]\   
\]
for $|x|<\frac{1}{2}.$ Note that $\Xi(0)=0.$ We will show that there exist
 $x_\pm$ arbitrary close to $0$ such that $\pm \Xi(x_\pm)>0,$ thus 
proving $\Lambda_{r,s}$ is neither convex nor concave for all $M$ in a neighbourhood of the identity.

To prove the claim above, 
   let $c=b-1$ and  observe that, for $k\ge 2,$
\[
 T^k =\begin{pmatrix} 0&0\\0& c^k \end{pmatrix} 
  +   c^{k-1}  x \begin{pmatrix} 0&1\\1&0\end{pmatrix} 
 + x^2 T_k(x) = L_k(x) + x^2 T_k(x).
\]
for some (matrix) $T_k(x)$ that is polynomial in $x.$ 
Further, using $|x|<\frac12$ (and $|c|<1$), 
\[
 2^{-k} \|T_k(x)\| \le 2^{-k}\sum_{m=2}^k 
  \binom{k}{m} |x|^{m}\,|c|^{k-m}
 \le  2^{-k} (|x|+|c|)^k < \left(\frac34\right)^k.
\]
 Thus, letting
\[
F(x) = \sum_{k=2}^\infty \binom{r}{k} 2^{-k} T_k(x)
\]
it follows that 
\[
 \|F(x)\| \le  \sum_{k=2}^\infty\binom{r}{k} \left(\frac{3}{4}\right)^k
 \le \left(1+\frac34\right)^r
\]
 for $|x|<\frac12.$  Moreover,
\[
\begin{split}
A^r&= \Big(I+\frac12 \,T\Big)^r = \sum_{k=0}^\infty \binom{r}{k} 2^{-k} L_k(x) +x^2 F(x)\\
&  =
 I + \sum_{k=1}^\infty \binom{r}{k} 2^{-k}c^k \begin{pmatrix} 0&0\\0&1 
 \end{pmatrix} +  \frac{x}{c}  \sum_{k=1}^\infty \binom{r}{k} 
    2^{-k}c^k \begin{pmatrix}0&1\\1&0\end{pmatrix} + x^2F(x,c)\\
  &  =\begin{pmatrix} 1 & \frac{x}{c} [(1+ \frac{c}{2})^r -1] \\ 
     \frac{x}{c} [(1+ \frac{c}{2})^r - 1] & (1+\frac{c}{2})^r  \end{pmatrix} + x^2F(x) \\
& = G(x)+x^2F(x),
\end{split}
\]
 with $G(x)$ also uniformly bounded for $|x|<\frac12$ because
 $b<\frac12$ implies $\frac{1}{|c|}<2.$
 Hence, letting 
  $\alpha = \frac{1}{c}  \Big[\Big(1+\frac{c}{2}\Big)^r - 1\Big],$
 we have
\[
 \trace((K^* A^rMA^rK)^s)
  = \left (1 +  2xt\alpha + x^2[\alpha^2 + f(x)] \right)^s,
\]  
 where $f(x)$ is uniformly bounded. 
Thus, 
\[
\trace((K^* A^rMA^rK)^s)  =  1 +  2s \frac{xt}{b-1}  \Big[\Big(1+\frac{b-1}{2}\Big)^r - 1\Big] + O(x),
\]
 where $\lim_{x\to 0} \frac{O(x)}{x} =0.$

A similar argument with
\[
 A_2 = I + x \begin{pmatrix}0&1\\1&0\end{pmatrix}
\]
gives
\[
A_2^r \sim  \begin{pmatrix} 1 & rx \\ rx & 1\end{pmatrix}
\]
and thus
\[
 \trace((K^* A_2^r M A_2^r K)^s) = 1 +2strx + O_2(x),
\]
 where $\lim_{x\to 0} \frac{O_2(x)}{x}=0.$
Hence,
\[
\begin{split}
\Xi(x)=& \trace((K^* A_1^rMA_1^rK)^s) +  \trace((K^* A_2^r M A_2^r K)^s)
 -2 \trace((K^* A^rMA^rK)^s  \\ 
  = &  1 +  (1+ 2strx)  -2
   \left ( 1 +  2s \frac{xt}{b-1}  \Big[\Big(1+\frac{b-1}{2}\Big)^r - 1\Big]\right) + O(x)\\
  & =   2 sxt  \Big[ r - \frac{2}{b-1} \Big(\Big(1+\frac{b-1}{2}\Big)^r - 1 \Big) \Big]\ +O(x) \\
 & = 2st x g_r(b) +O(x),
\end{split}
\]
where $O(x)$ is a function satisfying $\lim_{x\to 0} \frac{O(x)}{x}=0.$
 Hence choosing $x$ near $0$ with opposite signs concludes the proof
 with the present choice of (not so invertible) $K.$

It remains to show that $K$ can be taken to be invertible. Let $A,A_1,A_2$ and $M$ be as defined above and for $k > 0$ set 
\[
K = K(k) = \begin{pmatrix} 
1 & 0 \\
0 & k
\end{pmatrix}.
\]
Define 
\[
\wXi(x,k)=\Lambda_{r,s}(A_1)[K(k),M] +\Lambda_{r,s}(A_2)[K(k),M]-2\Lambda_{r,s}(A)[K(k),M]\  . 
\]
Since $s>0,$ one has that $\Xi(x,k)$ is continuous in $k$ and that
\[
\lim_{k \to 0} \wXi(x,k) = \wXi(x,0) = \Xi(x)\ .
\]
Thus, for sufficiently small choices of $k,$ the sign of $\wXi(x,k)$ may be determined by choosing $x$ as above and then  $k>0$ sufficiently small.
 \end{proof}

\subsection{Stability of convexity of the trace function $\Psi_{p,q,s}$} In this subsection we show that the stability of convexity (and concavity) of $\Gamma_{p,s}$ and $\Psi_{p,q,s}$ are closely related.
As noted in \eqref{eq:Lambda-Psi}, in a certain parameter range, the function $\Psi$ can be obtained  from $\Lambda$ by taking $K=I;$ that is,
\[
 \Lambda_{r,s}(A)[I, M]=\Psi_{1,2r,s}(M,A)[I]. 
\]
Hence, when $K=I,$ Theorem~\ref{thm:Psi-conv} identifies  large parameter ranges where $\Lambda$ is convex and concave. We show as a corollary of Theorem \ref{th:stab - gamma} that this condition on $K$ is not stable in the sense that perturbing $M$ away from the identity even a {\it little}  results in the loss of convexity (or concavity) of $\Lambda.$

\begin{corollary}\label{cor:stab - Psi}
  For any $\epsilon  > 0$ there exists an invertible matrix $K\in \cI_2$
and a positive definite matrix $M\in \cP_2$ 
 such that $\|K-I\|<\epsilon$ and  $\Lambda_{r,s}(A)[K,M]$ is not 
\begin{enumerate}[(a)]
\item concave for $0\leq r\leq  1/2$ and $0<s\leq \frac{1}{1+2r},$
\item convex for $-1/2< r\leq 0,$ and $s\geq \frac{1}{1+2r}.$
\end{enumerate}
\end{corollary}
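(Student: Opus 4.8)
The plan is to obtain the corollary directly from Theorem~\ref{th:stab - gamma} by exploiting a symmetry of $\Lambda_{r,s}$ that interchanges the roles of $K$ and $M$. Theorem~\ref{th:stab - gamma} already destroys both convexity and concavity by perturbing $M$ slightly off the identity while holding a fixed diagonal $K$; a symmetry swapping $K$ and $M$ converts this into loss of convexity/concavity under a small perturbation of $K$, which is exactly what the corollary asserts. The parameter ranges in parts (a) and (b) are precisely the ranges in which $\Lambda_{r,s}(\cdot)[I,M]=\Psi_{1,2r,s}(M,\cdot)[I]$ is concave, respectively convex, by Theorem~\ref{thm:Psi-conv}, so the content of the corollary is that these properties, valid at $K=I$, break under arbitrarily small perturbation of $K$.

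First I would record the identity
\[
\Lambda_{r,s}(A)[K,M]=\Lambda_{r,s}(A)\bigl[M^{1/2},\,KK^*\bigr],
\]
valid for all $A,M\in\cP_n$ and $K\in\cI_n$. Setting $B:=M^{1/2}A^rK$ one has $B^*B=K^*A^rMA^rK$ and $BB^*=M^{1/2}A^r(KK^*)A^rM^{1/2}$; since $B$ is invertible, $B^*B$ and $BB^*$ share the same spectrum with multiplicity, so $\trace[(B^*B)^s]=\trace[(BB^*)^s]$, and the right-hand trace is exactly $\Lambda_{r,s}(A)$ evaluated at the invertible matrix $M^{1/2}$ and the positive definite matrix $KK^*$. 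Note that $r$ and $s$ are unchanged by this operation.

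Next I would push the construction of Theorem~\ref{th:stab - gamma} through this identity. For $r\neq 0,1$ and $s>0$ the theorem produces $K_0=\operatorname{diag}(1,k)$ with $k>0$ small and $M_0=M(t)=\bigl(\begin{smallmatrix}1&t\\ t&1\end{smallmatrix}\bigr)$ with $|t|$ as small as we like, so that $\Lambda_{r,s}(\cdot)[K_0,M_0]$ is neither concave nor convex. The symmetry then gives
\[
\Lambda_{r,s}(\cdot)[K_0,M_0]=\Lambda_{r,s}(\cdot)\bigl[M(t)^{1/2},\,\operatorname{diag}(1,k^2)\bigr],
\]
so with $K:=M(t)^{1/2}\in\cI_2$ and $M:=\operatorname{diag}(1,k^2)\in\cP_2$ the function $\Lambda_{r,s}(\cdot)[K,M]$ is neither concave nor convex. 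Because $M(t)\to I$ as $t\to 0$ and the positive square root is continuous, $\|K-I\|=\|M(t)^{1/2}-I\|<\epsilon$ for $|t|$ small, which furnishes the required $K$ near the identity. For part (a) the range $0<r\le 1/2$, $0<s\le\frac{1}{1+2r}$ satisfies $r\neq 0,1$ and $s>0$, and ``neither concave nor convex'' yields in particular ``not concave''; for part (b) the range $-1/2<r<0$, $s\ge\frac{1}{1+2r}$ again satisfies $r\neq 0,1$ and (since $1+2r>0$) $s>0$, and we extract ``not convex.''

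The argument is essentially immediate once the symmetry is available, so the only step requiring real thought is spotting and verifying that identity via the $B^*B$/$BB^*$ spectral coincidence; the remainder is bookkeeping of parameter ranges together with continuity of the square root. The one genuine caveat is the endpoint $r=0$, which both parts formally include: there $A^r=I$, so $\Lambda_{0,s}(A)[K,M]=\trace[(K^*MK)^s]$ is constant in $A$ and hence both concave and convex, and Theorem~\ref{th:stab - gamma} (needing $r\neq 0$) does not apply. The conclusion should therefore be understood for $r\neq 0$.
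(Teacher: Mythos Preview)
Your proof is correct and follows essentially the same approach as the paper: both use the symmetry $\Lambda_{r,s}(A)[K,M^{*}M]=\Lambda_{r,s}(A)[M,KK^{*}]$ (you state it in the equivalent form with $M^{1/2}$) to transport the conclusion of Theorem~\ref{th:stab - gamma} from a small perturbation of $M$ to a small perturbation of $K$, with the same explicit choices $K=M(t)^{1/2}$ and $M=\operatorname{diag}(1,k^{2})$. Your added remark about the degenerate endpoint $r=0$ is a welcome observation that the paper itself does not make explicit.
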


\begin{proof}
The definition of $\Lambda_{r,s}$ and the fact that the eigenvalues of the matrices $XY$ and $YX$ are the same imply 
\begin{equation}\label{eq: lambda-sim}
	\Lambda_{r,s}(A)[K, M^*M]=\Lambda_{r,s}(A)[M, KK^*].
\end{equation}

Now let 
\begin{equation*}\label{eq:K}
\widetilde{K} = \frac{1}{2}\begin{pmatrix} \sqrt{1+t}+\sqrt{1-t} & \sqrt{1+t}-\sqrt{1-t} \\ \sqrt{1+t}-\sqrt{1-t} & \sqrt{1+t}+\sqrt{1-t} \end{pmatrix}\quad \text{ and } \quad  
\widetilde{M} = \begin{pmatrix} 1 & 0 \\ 0 & k^2 \end{pmatrix}.
\end{equation*}
Note that one can choose $0<t<1$ such that $\|\widetilde{K}-I\| < \epsilon.$
Let $K$ and $M$ be as in the proof of Theorem \ref{th:stab - gamma} and observe that
\[
\widetilde{M} = K^{\ast}K \quad \text{ and } \quad \widetilde{K}\widetilde{K}^{\ast} = \begin{pmatrix} 1 & t \\ t & 1 \end{pmatrix} = M.
\]
Then by \eqref{eq: lambda-sim},
\[
 \Lambda_{r,s}(A)[\widetilde{K}, \widetilde{M}]=\Lambda_{r,s}(A)[\widetilde{K}, K^*K]=\Lambda_{r,s}(A)[K, \widetilde{K}\widetilde{K}^*]=\Lambda_{r,s}(A)[K, M] \ .
\]
Invoking Theorem \ref{th:stab - gamma} completes the proof.
\end{proof}

\section{Convexity and concavity of the trace function $\Lambda_{r,s}$}\label{sec3}
In this section we collect the results concerning the convexity and concavity ranges of $\Lambda_{r,s}.$ The main result here is Theorem \ref{thm:main}.
 It  states that  $\Lambda_{r,s}(A)[K,M]$ is convex only for $r=1$ and $s\geq 1/2$, and that it is never concave. We will see that the questions of convexity and concavity for $\Lambda_{r,s}(\cdot)[K,M]$ for most parameter ranges were already settled in Theorem~\ref{th:stab - gamma} and
 Theorem~\ref{thm:gamma}.

\begin{theorem}\label{thm:main}
   Fix real numbers $r,s$ with $r\ne 1$ and $s>0$  and an integer  $n\geq 2.$ The function
	\[
	\cP_n \ni A\mapsto \Lambda_{r,s}(A)[K,M] 
	\]
	is convex for all $K\in\cI_n$ and $M\in\cP_n$ if and only if $r=1$ and $s\geq 1/2;$
         and is never concave for all $K\in\cI_n$ and $M\in\cP_n.$
\end{theorem}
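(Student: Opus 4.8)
The plan is to derive Theorem~\ref{thm:main} almost entirely from the stability result, Theorem~\ref{th:stab - gamma}, after a reduction to $s>0$, a dimension lift from $n=2$ to general $n$, and separate treatment of the degenerate value $r=0$. By the symmetry $\Lambda_{r,s}(A)[K,M]=\Lambda_{-r,-s}(A)[(K^*)^{-1},M^{-1}]$ it is enough to treat $s>0$, which is the standing hypothesis, so the only variable to organize the argument around is $r$.

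First I would handle the substantive range $r\neq 0,1$. For such $r$ and any $s>0$, Theorem~\ref{th:stab - gamma} already produces (at $n=2$) an invertible $K\in\cI_2$ and a positive definite $M\in\cP_2$ for which $A\mapsto\Lambda_{r,s}(A)[K,M]$ is \emph{neither} convex \emph{nor} concave; in particular there is a choice of $(K,M)$ defeating convexity and a choice defeating concavity. To promote this to arbitrary $n\geq 2$ I would lift by direct sums: put $\widehat K=K\oplus I_{n-2}\in\cI_n$, $\widehat M=M\oplus I_{n-2}\in\cP_n$, and restrict the argument to the convex slice $\{A\oplus I_{n-2}:A\in\cP_2\}\subset\cP_n$. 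A blockwise computation gives
\[
 \Lambda_{r,s}(A\oplus I_{n-2})[\widehat K,\widehat M]=\Lambda_{r,s}(A)[K,M]+(n-2),
\]
so along this slice the lifted function differs from the $2\times 2$ function by an additive constant. Since any function convex (resp.\ concave) on $\cP_n$ restricts to a convex (resp.\ concave) function on every convex subset, and the additive constant cancels in the midpoint second difference used in Theorem~\ref{th:stab - gamma}, the same points $A_1,A_2$ and their midpoint transfer verbatim and defeat both convexity and concavity on $\cP_n$.

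It remains to address the boundary value $r=0$, which is genuinely degenerate: because $A^0=I$ for $A\in\cP_n$, the function $\Lambda_{0,s}(A)[K,M]=\trace[(K^*MK)^s]$ is constant in $A$, hence simultaneously convex and concave for every $K,M$. I would flag this case explicitly, as it is the sole obstruction to a literal reading of the classification and is presumably excluded from its scope. For completeness, and although it falls outside the hypothesis $r\neq 1$, the sufficiency half of the convexity dichotomy is short: writing $M=M^{1/2}M^{1/2}$,
\[
 \Lambda_{1,s}(A)[K,M]=\trace\big[\big((M^{1/2}AK)^*(M^{1/2}AK)\big)^s\big]=\|M^{1/2}AK\|_{2s}^{2s},
\]
and since $A\mapsto M^{1/2}AK$ is linear while $t\mapsto t^{2s}$ is convex and increasing on $[0,\infty)$ when $2s\geq 1$, this Schatten-norm power is convex in $A$ whenever $s\geq 1/2$, which supplies the $r=1$, $s\geq 1/2$ entry on the right-hand side of the stated equivalence.

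The main obstacle is organizational rather than analytic: all the hard quantitative content is already packaged in Theorem~\ref{th:stab - gamma}, so the only genuine work is the clean direct-sum reduction and the careful bookkeeping of the degenerate $r=0$ endpoint. The one point requiring care is verifying that the additive constant $n-2$ truly drops out of the second-difference test, so that the explicit $2\times 2$ counterexamples of Theorem~\ref{th:stab - gamma} lift without modification to every dimension $n\geq 2$.
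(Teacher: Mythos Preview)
Your reduction for $r\neq 0,1$ via Theorem~\ref{th:stab - gamma} together with the direct-sum lift is correct and is exactly the paper's route; the block identity $\Lambda_{r,s}(A\oplus I_{n-2})[\widehat K,\widehat M]=\Lambda_{r,s}(A)[K,M]+(n-2)$ is valid and the additive constant indeed cancels in the midpoint test. Your Schatten-norm argument for convexity at $r=1$, $s\geq\tfrac12$ is also fine; the paper obtains the same conclusion by writing $\Lambda_{1,s}(A)[K,M]=\Gamma_{2,s}(M^{1/2}AM^{1/2})[M^{-1/2}K]$ and invoking Theorem~\ref{thm:gamma}(c). You are also right that the hypothesis ``$r\ne 1$'' is a typo for ``$r\ne 0$''.

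There is, however, a genuine gap once that typo is corrected: you never treat $r=1$ with $0<s<\tfrac12$ (to defeat convexity), nor do you show that $\Lambda_{1,s}$ fails to be concave for any $s>0$. Theorem~\ref{th:stab - gamma} explicitly excludes $r=1$, and your sufficiency argument says nothing about necessity or about concavity. The paper supplies this missing piece via a separate explicit second-order computation (Proposition~\ref{prop:not-convex-r=1}): with $K=\left(\begin{smallmatrix}1&0\\0&0\end{smallmatrix}\right)$, $M(t)=\left(\begin{smallmatrix}1&t\\t&1\end{smallmatrix}\right)$, $A_1=I$ and $A_2=I+x\left(\begin{smallmatrix}0&1\\1&0\end{smallmatrix}\right)$, the second difference expands as $\Xi_{1,s}(t,x)\sim \tfrac12 x^2\big(s^2t^2+s(\tfrac12-t^2)\big)$. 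This is positive for every $s>0$ whenever $t^2<\tfrac12$, defeating concavity at $r=1$; and for each fixed $0<s<\tfrac12$ one chooses $\tfrac12<t^2<1$ with $s<1-\tfrac{1}{2t^2}$ to make it negative, defeating convexity. Without this computation both the ``only if'' direction of the convexity equivalence and the ``never concave'' assertion remain unproved at $r=1$.
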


The proof of Theorem~\ref{thm:main} is divided into three parts. Proposition~\ref{prop: lambda-cvx} below asserts the convexity of $\Lambda_{r,s}$ for $r=1$ and $s\geq 1/2.$ Theorem~\ref{th:stab - gamma} disproves $\Lambda_{r,s}$ is neither  convex nor concavity  for all $r\neq 0, 1$ and $s>0.$ Finally, Proposition~\ref{prop:not-convex-r=1}, based upon Theorem~\ref{thm:gamma}, shows $\Lambda_{r,s}$ is neither convex nor concave when $r=1$ and $0<s<1/2.$

\begin{proposition}\label{prop: lambda-cvx}
	Fix a positive integer $n,$ $K\in\cM_n$ and $M \in \cP_n.$ If $s\ge \frac12,$ then the function
	\[
	\cP_n \ni A\mapsto \Lambda_{1,s}(A)[K,M]
	\]
	is convex in $A.$
\end{proposition}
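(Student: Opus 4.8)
The plan is to recognize $\Lambda_{1,s}(\cdot)[K,M]$ as the $2s$-th power of a Schatten norm composed with a linear map, and then to invoke convexity of Schatten norms. First I would factor the argument. Since $M\in\cP_n$ has a positive definite square root $M^{1/2}$, setting
\[
 X = X(A) := M^{1/2} A K
\]
and using $A^*=A$ (as $A\in\cP_n$) together with $(M^{1/2})^*=M^{1/2}$ gives
\[
 X^* X = K^* A M^{1/2} M^{1/2} A K = K^* A M A K .
\]
The key point is that $A\mapsto X(A)=M^{1/2}AK$ is a linear (in particular affine) map $\cM_n\to\cM_n$; since precomposing a convex function with an affine map preserves convexity, it suffices to prove that $X\mapsto \trace[(X^*X)^s]$ is convex on $\cM_n$.

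Next I would identify this functional with a Schatten norm. Writing $\sigma_1,\dots,\sigma_n\ge 0$ for the singular values of $X,$ the eigenvalues of $(X^*X)^s$ are the $\sigma_i^{2s},$ so
\[
 \trace[(X^*X)^s] = \sum_{i=1}^n \sigma_i^{2s} = \|X\|_{2s}^{2s},
\]
where $\|\cdot\|_{2s}$ denotes the Schatten $2s$-norm. The hypothesis $s\ge \frac12$ is precisely what guarantees $2s\ge 1,$ which is exactly the range in which $\|\cdot\|_{2s}$ is a genuine norm on $\cM_n$ (and not merely a quasi-norm); in particular it is a convex, nonnegative function of $X.$

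Finally I would upgrade convexity of the norm to convexity of its $2s$-th power. Because $t\mapsto t^{2s}$ is nondecreasing and convex on $[0,\infty)$ when $2s\ge 1,$ for $\lambda\in[0,1]$ one has
\[
 \|\lambda X + (1-\lambda)Y\|_{2s}^{2s} \le \big(\lambda\|X\|_{2s} + (1-\lambda)\|Y\|_{2s}\big)^{2s} \le \lambda\|X\|_{2s}^{2s} + (1-\lambda)\|Y\|_{2s}^{2s},
\]
where the first inequality combines the triangle inequality with monotonicity of $t\mapsto t^{2s}$ and the second uses its convexity. Composing with the linear map $X(\cdot)$ then yields convexity of $A\mapsto \Lambda_{1,s}(A)[K,M],$ as desired. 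I do not expect a genuine obstacle here: the one substantive point is that the argument really does require $2s\ge 1$ for $\|\cdot\|_{2s}$ to be subadditive, which is why the threshold $s=\frac12$ is sharp and dovetails with the failure of convexity for $0<s<\frac12$ recorded in Proposition~\ref{prop:not-convex-r=1}.
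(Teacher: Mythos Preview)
Your proof is correct. Both your argument and the paper's hinge on the same linear substitution idea, but they finish differently. The paper rewrites
\[
\Lambda_{1,s}(A)[K,M]=\Gamma_{2,s}\big(M^{1/2}AM^{1/2}\big)[M^{-1/2}K]
\]
and then simply quotes Theorem~\ref{thm:gamma}(c) (the Carlen--Lieb convexity of $\Gamma_{p,s}$ for $1\le p\le 2$, $s\ge 1/p$) together with the linearity of $A\mapsto M^{1/2}AM^{1/2}$. You instead set $X=M^{1/2}AK$, recognize $\trace[(X^*X)^s]=\|X\|_{2s}^{2s}$, and deduce convexity directly from subadditivity of the Schatten norm for $2s\ge 1$ combined with monotonicity and convexity of $t\mapsto t^{2s}$. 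Your route is more self-contained---in effect it supplies an independent proof of the $p=2$ case of Theorem~\ref{thm:gamma}(c)---while the paper's route is shorter because it outsources that step to the literature. Either way the crucial observation is that when $r=1$ the map $A\mapsto M^{1/2}AK$ (equivalently $A\mapsto M^{1/2}AM^{1/2}$) is linear, which is exactly what fails for $r\ne 1$.
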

\begin{proof}
	
	The function $\Lambda_{1,s}$ can be expressed in terms of the function $\Gamma,$ for which the convexity range is known. Explicitly,
\begin{equation}\label{eq:lambda-gamma}
	\Lambda_{1,s}(A)[K,M] =\Lambda_{1,s}(M^{1/2}AM^{1/2})[M^{-1/2}K,I]=\Gamma_{2,s}(M^{1/2}AM^{1/2})[M^{-1/2}K]\ .
	\end{equation}
	From Theorem \ref{thm:gamma}, for  $s \in [1/2,\infty),$ the mapping $A\mapsto \Gamma_{2,s}(M^{\frac12} A M^{\frac12})[M^{-\frac12}K]$
         is convex. 
\end{proof}

\begin{proposition}\label{prop:not-convex-r=1}
 Given an integer $n\ge2,$ there exists an invertible matrix $K\in\cI_n$ such 
 that  if $0<s<\frac12,$ 
   then there exist  positive definite matrices $M_{\pm} \in\cP_n$ so that the function
\[
\cP_n \ni A\mapsto \Lambda_{1,s}(A)[K,M]
\]
is not convex in $A$ with $M=M_+$ and  not concave in $A$ with $M=M_-.$
\end{proposition}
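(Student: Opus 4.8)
The plan is to push everything through the function $\Gamma_{2,s}$, whose exact convexity and concavity thresholds are recorded in Theorems~\ref{thm:gamma} and~\ref{thm:gamma-necessary}. By \eqref{eq:lambda-gamma}, for every invertible $K$ and every $M\in\cP_n$,
\[
\Lambda_{1,s}(A)[K,M]=\Gamma_{2,s}\big(M^{1/2}AM^{1/2}\big)\big[M^{-1/2}K\big],
\]
and since $A\mapsto M^{1/2}AM^{1/2}$ is a linear bijection of $\cP_n$, the map $\Lambda_{1,s}(\cdot)[K,M]$ is convex (resp.\ concave) precisely when $\Gamma_{2,s}(\cdot)[M^{-1/2}K]$ is. I would fix once and for all an arbitrary invertible $K\in\cI_n$ (so in particular the existence claim will hold; $K=I$ is the simplest choice). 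The second observation is that, writing $X=BL$, one has $L^{*}B^{2}L=X^{*}X$ and $B(LL^{*})B=XX^{*}$, which share their nonzero spectrum; hence
\[
\Gamma_{2,s}(B)[L]=\trace\big((B\,LL^{*}B)^{s}\big),
\]
so $\Gamma_{2,s}(\cdot)[L]$ depends on $L$ only through the positive definite matrix $LL^{*}$.

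For the failure of convexity I would take $M_{+}=KK^{*}\in\cP_n$. Then $K^{*}A(KK^{*})AK=(K^{*}AK)^{2}$ with $K^{*}AK\succ 0$, so
\[
\Lambda_{1,s}(A)[K,M_{+}]=\trace\big((K^{*}AK)^{2s}\big).
\]
Since $0<2s<1$, the scalar function $t\mapsto t^{2s}$ is concave, hence $B\mapsto\trace(B^{2s})$ is concave on $\cP_n$; composing with the linear map $A\mapsto K^{*}AK$ shows $\Lambda_{1,s}(\cdot)[K,M_{+}]$ is concave, and as it is not affine it is not convex. This single $M_{+}$ works for every $s\in(0,\tfrac12)$.

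For the failure of concavity I would feed $p=2>1$ into the necessity statement Theorem~\ref{thm:gamma-necessary}(a): it guarantees an invertible $L\in\cM_2$ for which $\Gamma_{2,s}(\cdot)[L]$ is not concave. Using the $LL^{*}$-dependence I replace $L$ by $\tilde L:=(LL^{*})^{1/2}\in\cP_2$, which gives the same function; when $n>2$ I enlarge it to $\tilde L\oplus I_{n-2}\in\cP_n$ and test on block matrices $A^{(0)}\oplus I_{n-2}$, which only adds the constant $n-2$ to $\Gamma_{2,s}$ and therefore preserves the strictly positive second difference witnessing non-concavity. Finally I realize $\tilde L$ by a suitable $M_{-}$: the map $P\mapsto P(KK^{*})P$ is a bijection of $\cP_n$, so I can solve $M_{-}^{-1/2}KK^{*}M_{-}^{-1/2}=\tilde L\tilde L^{*}$ (for $K=I$ simply $M_{-}=\tilde L^{-2}$); then $\Lambda_{1,s}(\cdot)[K,M_{-}]$ coincides with $\Gamma_{2,s}(\cdot)[\tilde L]$ after the linear reparametrization $A\mapsto M_{-}^{1/2}AM_{-}^{1/2}$ and so is not concave.

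The crux is the order of quantifiers: $K$ is chosen before $s$, while the Hiai witness $L$ inevitably depends on $s$. This is precisely dissolved by the $LL^{*}$-dependence together with the surjectivity of $P\mapsto P(KK^{*})P$, which let a single fixed $K$ realize every function $\Gamma_{2,s}(\cdot)[L]$ as $M$ varies, so all of the $s$-dependence is absorbed into $M_{-}$. The only other friction is the dimension gap---Theorem~\ref{thm:gamma-necessary}(a) is phrased for $2\times2$ matrices whereas the proposition allows any $n\ge2$---which the block embedding above removes; verifying that this embedding keeps the second difference positive and that the $2\times2$ witness may be taken positive definite are the remaining routine checks.
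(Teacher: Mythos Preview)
Your argument is correct and constitutes a genuinely different proof from the paper's.  The paper proceeds by an explicit second-order expansion: with
\[
K=\begin{pmatrix}1&0\\0&0\end{pmatrix},\qquad M(t)=\begin{pmatrix}1&t\\t&1\end{pmatrix},\qquad
A_1=I,\quad A_2=I+x\begin{pmatrix}0&1\\1&0\end{pmatrix},
\]
it computes $\Xi_{1,s}(t,x)\sim x^{2}\big(s^{2}t^{2}+s(\tfrac12-t^{2})\big)$ and reads off non-concavity for $t^{2}<\tfrac12$ and non-convexity for $\tfrac12<t^{2}<1$ with $0<s<1-\tfrac{1}{2t^{2}}$, then perturbs $K$ to make it invertible.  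By contrast, you reduce everything to $\Gamma_{2,s}$: the choice $M_{+}=KK^{*}$ collapses $\Lambda_{1,s}(A)[K,M_{+}]$ to $\trace\big((K^{*}AK)^{2s}\big)$, which is strictly concave for $0<2s<1$; and for $M_{-}$ you invoke Hiai's necessity Theorem~\ref{thm:gamma-necessary}(a) at $p=2$ to obtain a $2\times2$ witness $L$, note via $X^{*}X\sim XX^{*}$ that $\Gamma_{2,s}(\cdot)[L]$ depends only on $LL^{*}$, and then solve the Riccati equation $M_{-}^{-1/2}(KK^{*})M_{-}^{-1/2}=LL^{*}$ to realize this witness for the prescribed $K$.

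The explicit route in the paper is self-contained and yields concrete $M(t)$; your route is shorter and more structural, trading the computation for a dependence on the cited result of Hiai.  Your observation that $\Gamma_{2,s}(\cdot)[L]$ depends only on $LL^{*}$, together with surjectivity of $P\mapsto P(KK^{*})P$ on $\cP_n$, is exactly what makes a \emph{single} fixed $K$ suffice while all $s$-dependence is absorbed into $M_{-}$; this neatly resolves the quantifier order in the proposition.  The block-diagonal embedding and the fact that $\tilde L=(LL^{*})^{1/2}$ may be taken positive definite are routine, as you note.
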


\begin{proof}
It suffices to prove the result with $n=2.$ In this case choose
\[
   K=\begin{pmatrix} 1&0\\0&0\end{pmatrix}, \ \ \ M=M(t)=\begin{pmatrix} 1&t\\t&1\end{pmatrix}, 
\]
 for $|t|<1$ to be chosen later.

Following the proof of Theorem \ref{th:stab - gamma} consider
\[
A_1=\begin{pmatrix} 1 & 0 \\0& 1\end{pmatrix}, \ 
A_2 =A_2(x) = I+ x \begin{pmatrix}0&1\\1&0\end{pmatrix}\ 
\]
for $|x|<\frac{1}{2}.$  Let $A = \frac{A_1+A_2}{2}.$ Expansion up to second order terms in $x$ gives
\begin{align*}
	\trace((K^* A_1 M A_1 K)^s) &= 1,\\
	\trace((K^* A_2 M A_2 K)^s) &= \big(1 + 2tx + x^2\big)^s \sim 1 + 2stx + sx^2 + 2s(s-1)t^2x^2,\\
    \trace((K^* A M A K)^s)  &  = \bigg(1 + tx + \frac{x^2}{4}\bigg)^s \sim 1 + stx + s\frac{x^2}{4} + \frac{s(s-1)}{2}t^2x^2. 
\end{align*}
Hence
\begin{align*}
\Xi_{1,s}(t,x)=& \trace((K^* A_1 MA_1 K)^s) +  \trace((K^* A_2 M A_2 K)^s)
-2 \trace((K^* A MA K)^s)  \\ 
&\sim  \frac12  x^2\bigg(s^2t^2 + s\Big(\frac{1}{2}-t^2\Big)\bigg) = x^2\, h(s, t).
\end{align*}
Thus for $x$ sufficiently small, the sign of $\Xi_{1,s}(t,x)$ is determined by the sign of $h(s, t).$ Since $s>0,$ the latter is positive for all $t$ such that $t^2 < \frac{1}{2}.$ Thus, for such $t$ and every $s>0,$ the function  $A\mapsto\Lambda_{1,s}(A)[K,M(t)]$  is not concave.
Moreover, the roots of $h(s, t)$ viewed as a quadratic function in $s$ are $0$ and $1 - \frac{1}{2t^2}.$ Therefore $h(s, t)$ is negative for $1/2<t^2<1$ and $0 < s < 1 - \frac{1}{2t^2}<1/2.$ Hence, for every fixed $0 < s < \frac{1}{2}$ there is a $|t|<1$ such that $h(s, t)<0.$ Therefore,  
 for each $0<s<\frac12$ there exists a $|t|<1$ such that the map $A\mapsto \Lambda_{1,s}(A)[K,M(t)]$ is not convex. 
 
 The matrix $K$  can be chosen invertible by the same argument as presented at the end of Theorem \ref{th:stab - gamma}.
\end{proof}

\begin{remark}\rm
	(a) Note that even though $\Lambda_{1,s}(A)[I,M]$ is convex in $A$ for fixed positive definite $M$ and  $s\geq \frac12$ by Proposition~\ref{prop: lambda-cvx}, it is not jointly convex in $A$ and $M$ for fixed $s\geq\frac12$  as can be seen by combining the duality between the functions $\Lambda$ and $\Psi$ in \eqref{eq:Lambda-Psi} and  Theorem~\ref{thm:Psi-conv-nec} item~\eqref{i:Psi-conv-nec2}.
  That $\Lambda_{1,s}(A)[I,M]$ is not jointly convex for fixed $s\ge 0$ is also a consequence of the following example that has the  advantage of involving $2\times 2,$  rather than  $4\times 4,$ matrices as was done in Theorem \ref{thm:gamma-necessary}. Let 
	\[
	A_1 = \begin{pmatrix}
		1/2 & 0 \\ 
		0 & 1
	\end{pmatrix},
	\qquad
	A_2 = \begin{pmatrix}
		1 & 0 \\ 
		0 & 1/2
	\end{pmatrix},
	\qquad
	M_1 = \begin{pmatrix}
		4 & 0 \\ 
		0 & 1
	\end{pmatrix},
	\qquad
	M_2 = \begin{pmatrix}
		1 & 0 \\ 
		0 & 4
	\end{pmatrix}.
	\]
	Direct computation shows 
	\[
	\Lambda_{1,s}(A_1)[I_2,M_1]/2+\Lambda_{1,s}(A_2)[I_2,M_2]/2-\Lambda_{1,s}\left(\frac{A_1+A_2}{2}\right)\left[I_2,\frac{M_1+M_2}{2}\right] = 2 - 2\left(\frac{45}{32}\right)^s,
	\]
	which is negative for $s>0.$
	
	(b) We mentioned that Zhang \cite{Z21} shows joint convexity of the function 
	$$(A,B,C)\mapsto\trace\left| B^{-p}K_1AK_2C^{-q}\right|^{s}$$ for $0< p,q\leq 1/2$ such that $p+q<1$ and $s\geq 1/(1-p-q).$ Moreover, he shows that these conditions are optimal in the sense that if the function $(A,B,C)\mapsto\trace\left| B^{-p}AC^{-q}\right|^{s}$ is jointly convex in any finite dimension, then $p,q,s$ must satisfy the above conditions.
\end{remark}

\vspace{0.3in}

\textbf{Acknowledgements.} All authors are thankful to the American Institute of Mathematics for sponsoring the workshop ``Noncommutative inequalities" in June 2021 as well as the organizers of the workshop, where the collaboration between the authors and the work described in this paper began.

\end{document}